\documentclass[sn-mathphys-num]{sn-jnl}

\usepackage{graphicx}%
\usepackage{multirow}%
\usepackage{amsthm}%
\usepackage{mathrsfs}%
\usepackage[title]{appendix}%
\usepackage{xcolor}%
\usepackage{textcomp}%
\usepackage{manyfoot}%
\usepackage{booktabs}%
\usepackage{algpseudocode}%
\usepackage{listings}%



\usepackage{framed}
\usepackage{url}
\usepackage[ruled,vlined,linesnumbered]{algorithm2e}
\usepackage{comment}
\usepackage{tikz}
\usetikzlibrary{positioning,quotes}
\setlength {\marginparwidth }{2cm}
\usepackage[utf8]{inputenc}
\usepackage{array}
\usepackage{color, colortbl}
\usepackage[T1]{fontenc}
\usepackage{todonotes}
\usepackage{float}
\usepackage{comment}

\usepackage{soul}
\usepackage{multirow}
\usepackage[export]{adjustbox}
\usepackage{amsfonts} 

\usepackage{lmodern}

\usepackage{lineno}

\usepackage[leqno]{amsmath}
\usepackage{threeparttable}
\usepackage{amsmath}
\usepackage{multirow}
\usetikzlibrary{arrows,shapes}
\usepackage{algorithmicx}
\newtheorem{lemma}{Lemma}
\newcommand{\df}{\textsc{Diffusion}}
\newcommand{\LBIF}{\textsc{Level-Based-Influence}}
\newcommand{\PBGF}{\textsc{Profit-Based-Greedy}}
\newcommand{\MPBF}{\textsc{Max-Profit-Based}}
\newcommand{\spr}{\textsc{Spreader}}

\newcommand{\LB}{\text{{\scriptsize{LBGH}}}}
\newcommand{\MD}{\text{{\scriptsize{MDFH}}}}
\newcommand{\PB}{\text{{\scriptsize{PBGH}}}}
\newcommand{\MP}{\text{{\scriptsize{MPBGH}}}}
\newcommand{\CB}{\text{{\scriptsize{CBH}}}}

\newcommand{\LBE}{\text{Level Based Greedy Heuristic}}
\newcommand{\MDE}{\text{Maximum Degree First Heuristic}}
\newcommand{\PBE}{\text{Profit Based Greedy Heuristic}}
\newcommand{\MPE}{\text{Maximum Profit Based Greedy Heuristic}}

\newcommand\clearrow{\global\let\rowmac\relax}

\newcommand{\IM}{\textsc{Interest  Maximization}}

\newcommand{\GR}{\textbf{\scriptsize{$G(V, E, \eta,t_A)$}}}

\setcounter{secnumdepth}{3}

\newcounter{cases}
\newcounter{subcases}[cases]

\usepackage{geometry}
\usepackage{multirow,tabularx}

\newcolumntype{Y}{>{\centering\arraybackslash}X}
\renewcommand{\arraystretch}{2}


\SetKwComment{Comment}{$\triangleright$\ }{}

\tikzstyle{vertex}=[circle,fill=black!25,minimum size=12pt,inner sep=0pt]
\tikzstyle{selected vertex} = [vertex, fill=red!60]
\tikzstyle{edge} = [draw,thick,-]
 \tikzstyle{weight} = [font=\small]
\tikzstyle{selected edge} = [draw,line width=3pt,-,red!50]
\tikzstyle{ignored edge} = [draw,line width=3pt,-,black!20]

\let\cline\cmidrule


%

%
%

%

\raggedbottom

\begin{document}
	
	\title[Interest Maximization in Social Networks]{Interest Maximization in Social Networks}
	
	
	\author*[$^1$]{\fnm{Rahul Kumar} \sur{ Gautam$^{0009-0009-7693-3863}$ }}\email{19mcpc06@uohyd.ac.in}
	
	\author[$^2$]{\fnm{Anjeneya Swami} \sur{Kare$^{0000-0003-3644-4802}$}}\email{askcs@uohyd.ac.in}
	
	\author[$^3$]{\fnm{S. Durga} \sur{Bhavani$^{0000-0003-4413-0328}$}}\email{sdbcs@uohyd.ac.in}
	
	\affil[]{\orgdiv{School of Computer and Information Science}, \orgname{University Of Hyderabad}, \orgaddress{\street{Gachibowli}, \city{Hyderabad}, \postcode{500046}, \state{Telangana}, \country{India}}}
	
	
	
	
	\abstract{Nowadays, organizations use viral marketing strategies to promote their products through social networks. A graph represents the social networks.
		It is expensive to directly send the product promotional information to all the users in the network. In this context, Kempe et al.~\cite{kempe2003maximizing} introduced the Influence Maximization (IM) problem, which identifies $k$ most influential nodes (spreader nodes) such that the maximum number of people in the network adopt the promotional message. 
		
        In this work, we propose a maximization version of PAP called the \IM{} problem. Different people have different levels of interest in a particular product. This is modeled by assigning an interest value to each node in the network. Then, the problem is to select $k$ initial spreaders such that the sum of the interest values of the people (nodes) who become aware of the message is maximized. 
		
		We study the \IM{} problem under two popular diffusion models: the Linear Threshold Model (LTM) and the Independent Cascade Model (ICM). We show that the \IM{} problem is NP-Hard under LTM. We give linear programming formulation for the problem under LTM. We propose four heuristic algorithms for the \IM{} problem: \LBE{} (\LB{}), Maximum Degree First Heuristic (\MD{}), \PBE{} (\PB{}), and Maximum Profit Based Greedy Heuristic (\MP{}). Extensive experimentation has been carried out on many real-world benchmark data sets for both diffusion models. The results show that among the proposed heuristics, \MP{} performs better in maximizing the interest value. }

	\keywords{Influenced Maximization, Information Propagation, Linear Threshold Model, Independent Cascade Model, and Heuristics}
	
	
	
	\maketitle
	
	\section{Introduction}\label{sec:intro}
	Due to the increasing use of smartphones, people are connected to their friends, family, or customers through the internet. 
	We call the network of people a social network (SN). People on social media look at the information sent by their friends and either use the information for their interest, forward it to their friends, or do both. In this way, the information propagates in the network. Social networks become very important for marketing, political campaigns, and promoting products through e-commerce platforms.  
	
	E-commerce business is growing very fast across the world in urban as well as rural areas~\cite{rural2020}. 
	Small businesses use social media to grow and compete by advertising their products on social media. Social media networks help companies to attract customers without much physical effort. 
	When someone gets information from an important person, they start believing in the information and may forward the same to their friends. Hence, companies select a few highly influential people on social media to advertise their products. Identifying influential people in social networks is known as the Influence Maximization (IM) problem. 

	The diffusion model is the process by which information propagates in social networks. There are two fundamental diffusion models: the Linear Threshold Model (LTM)  and the Independent Cascade Model (ICM). \citeauthor{kempe2003maximizing}~\cite{kempe2003maximizing} propose the Influence Maximization (IM) problem in social networks. The IM problem is also known as the Target Set Selection problem. Two versions of the IM problem exist under the LTM. Maximization version: The input for the IM problem is a graph $G$ and a positive integer $k$. The objective is to find a set of at most $k$ highly influential nodes $S \subseteq V$ that maximize the number of influenced nodes in the network. In the minimization variant, for a given graph $G$ and a positive integer $l$ where $l \le |V|$, we need to find a seed set $S \subseteq V$ with minimum cardinality that influences at least $l$ vertices.
	
	\citeauthor{cordasco2018evangelism}~\cite{cordasco2018evangelism} propose a variant of the minimization version of the IM problem under the LTM where $l=|V|$. The problem is the Perfect Evangelising Set (PES) in social networks. The PES problem has two thresholds for each vertex: influence threshold $t_I$ and activation threshold $t_A$. The influence threshold is always less than or equal to the activation threshold. A vertex has three states: {\it non-aware}, {\it influenced}, and {\it activated (spreader)}. Initially, all the vertices of the graph are non-aware. We select some set of initial spreaders $S \subseteq V$, which are assumed to be active vertices. A vertex with sufficient active neighbors becomes influenced or activated when the vertex satisfies the respective influence or activation threshold. An influenced node is assumed to believe the information but does not forward the information. On the other hand, an activated node is an influenced node that forwards (spreads) the information to its neighbors. The objective is to find the minimum number of initial spreaders that influence all the nodes of the graph. Later, \citeauthor{cordasco2019active}~\cite{cordasco2019active} present the Perfect Awareness Problem (PAP), which is a specialization of the PES problem where $t_I(v)=1$, $\forall v \in V$. Here, as the influence threshold is $1$, the influenced nodes are also termed as {\it aware} nodes. 
	
	We propose a maximization version of the PAP problem, which we call the { \IM{}} problem.  The motivation for the problem is that a small company, compared to a blue-chip company, can compete by giving discounts on products and providing quality and indigenous products but may not be able to advertise in a big way due to financial constraints. The idea for small firms is to try to target highly interested buyers while advertising the product.  This strategy improves the chances of selling a product to companies with low resources. 
	
	Inputs for the \IM{} problem are a graph $G(V, E, t_A,\eta)$ and a positive integer $k$ denoting the size of the seed set,  where $V$ is the set of vertices, $E$ is the set of edges, $t_A: V \rightarrow  \mathbb{Z}^{+}$ is the activation threshold function and $\eta: V \rightarrow (0,1]$ is the interest function, and we know that a node having more interest value resists less in spreading of the information. 
	Note that, like in PAP, this problem also has $t_I(u)=1$ for all $u \in V$.  The goal is to find a seed set of size at most $k$ that maximizes the sum of interest values associated with all the influenced (aware) vertices. 
	Throughout the paper, we use the words influenced and aware synonymously;  activated and spreader are also synonymously used. 
	
	Initially, all the vertices of the graph $G$ are in a non-aware state. A vertex changes its state from non-aware to aware or aware to the spreader in only one direction. A seed set $S\subseteq V$ is a set of initially activated vertices, which are called the initial spreaders. The vertex $u \in S$ spreads information immediately to its neighbors. In the spreading process, a vertex $v \notin S$ can be a spreader with the condition $t_A(v)\le |N(v)\cap S|$ where $N(v)$ is a set of neighbors of the vertex $v$. 

	The spreading process stops when there is no change in the status of the number of aware/influenced nodes.  The \IM{} problem aims to find a set $S\subseteq V$ that maximizes the sum of the interest values of the influenced vertices. 
	
	

	
    The list of our contributions in this paper is as follows:
    \begin{enumerate}
        \item We propose the \IM{} problem.
        \item We first prove that under LTM, the \IM{} problem is NP-Hard.
        \item We provide an LP formulation for \IM{} under LTM. By using the Gurbi solver~\cite{gurobi}, we find optimal solutions for small datasets.
        \item We present four heuristic algorithms for \IM{}. 
         \begin{itemize}
             \item \LBE{}.
             \item  \MDE{}.
             \item \PBE{}.
             \item \MPBF{}.
         \end{itemize}
        \item The proposed heuristics are tested on real-world benchmark data sets for both the diffusion models LTM and ICM.
    \end{enumerate}
	
	The paper is organized as follows: The recent studies on information spreading in social networks are discussed in Section~\ref{sec:relwork}. The problem definition, NP-Hard reduction from max-coverage-problem to \IM{}, and LP formulation for \IM{} under the linear threshold model are discussed in Section \ref{sec:proposedProblems}. In Section~\ref{sec:proposedHeuristics}, we propose heuristics: \LBE{}, \MDE{}, \PBE{}, and \MPE{}. The heuristics are tested on real-world data sets. The analysis of the outcome of heuristics is discussed in Section~\ref{sec:resultanddiscussion}.  The paper is finalized with the conclusion in Section~\ref{sec:conclusion}.

	\section{Related Work}\label{sec:relwork}
	
	Information spreading is a trending area for research in this digital world, where people are connected through social media. \citeauthor{kempe2003maximizing}~\cite{kempe2003maximizing} proposed  Influence Maximization (IM) problem in social networks. Inputs for the IM problem are graph $G$ and $k\in \mathbb{Z}^+$. The objective is to select a set of initial spreaders $S\subseteq V(G)$ (seed set) of size $k$ that maximizes influenced people in the network. Another variant of IM is to find the minimum size of the seed set that influences the whole graph. \citeauthor{kempe2005influential}~\cite{kempe2005influential} also propose a greedy algorithm with an approximation factor $(1-1/e-\epsilon)$ for Decreasing Cascade Model. \citeauthor{chen2009approximability}~\cite{chen2009approximability} shows the Target Set Selection problem is hard to approximate less than the poly-logarithmic factor. \citeauthor{cordasco2018evangelism}~\cite{cordasco2018evangelism} propose a Perfect Evangelising Set (PES) in social networks where each vertex changes its state among three states: non-influenced (non-aware), influenced, and spreader. We need to find the minimum size of the seed set that influences the whole graph under the Linear Threshold Model. A similar problem is the Perfect Awareness Problem~\cite{cordasco2019active} (PAP). The difference between PAP and PES is that an influenced node must have sufficient spreader neighbors in PES. On the other hand, an influenced node must have at least one spreader in the PAP problem. For the Perfect Awareness Problem, ~\citeauthor{cordasco2019active}~\cite{cordasco2019active} propose an exact algorithm for trees and a heuristic for the general graphs. Recently, \citeauthor{pereira2021effective}~\cite{pereira2021effective} and ~\citeauthor{2023RahulCent}~\cite{2023RahulCent} proposed heuristics which improve results for general graphs. In real-life scenarios, information originating from a vertex does not continuously spread. The information may spread up to some hops from the source of the information. Recently,~\citeauthor{2023tieredInf}~\cite{2023tieredInf}  addressed this issue and introduced a variant of the Target Set Selection problem called target set selection in social networks with tiered influence and activation thresholds.
	
	Opinion maximization is a variant of the IM problem.  In the opinion maximization (OM) problem, initially, all vertices are inactive. The task is to pick highly influential people who maximize the sum of people's opinions~\cite{openion2013}. The difference between OM and \IM{} is that in OM, a vertex has two states, active and inactive, but in \IM{}, a vertex has three states: non-aware,  aware, and spreader. In the \IM{}, the interest value of the vertices is part of the input, but opinion is calculated for each vertex. In other words, we can say that the interest values of the vertices are independent of each other. \citeauthor{alla2023opinion}~\cite{alla2023opinion} proposed heuristics for the OM problem based on centrality measures and clustering. For a given graph $G$ and integer $k$, finding the maximum number of influential nodes is studied by \citeauthor{inf-martingale}~\cite{inf-martingale}. 
	
	Some of the related problems to information spreading are Graph Burning~\cite{gautam2022faster}, $k$-center~\cite{k-center}, the Target Influence Maximization problem in competitive networks (TIMC) based on the Independent Cascade Model proposed by~\citeauthor{2023targetedCompt} in~\cite{2023targetedCompt}, and Rumor Minimization~\cite{rumor2020containment}. 
	Based on centrality measures, \citeauthor{gautam2022faster}~\cite{gautam2022faster} propose three heuristics for the Graph Burning problem, and very recently, \citeauthor{2023burningNumber}~\cite{2023burningNumber} give a genetic algorithm for Graph Burning based on centrality measure. In the $k$-center problem, we need to establish $k$ warehouses that minimize the maximum distance from people to warehouses. Rumor Minimization is just stopping rumors by spreading truths among rumor-adopted vertices in the networks. 
	
	We study \IM{} problem under the Linear Threshold Model (LTM) and Independent Cascade Model (ICM). Under LTM, we prove that the problem is NP-Hard, and we provide a linear programming formulation for the problem. We propose four heuristics for the \IM{} problem. The heuristics are tested on real-world datasets under the diffusion models LTM and ICM.  
	
	\section{Interest Maximization}
	Under the Linear Threshold Model (LTM), we show that the \IM{} problem is NP-Hard and propose the linear programming formulation. We also find optimal solutions by solving Linear Equations using Gurbi-Solver~\cite{gurobi}. 

	\label{sec:proposedProblems}
	\subsection{Interest Maximization under LTM }
	An LTM generally has the following parameters: A weighted directed graph $G = (V,E)$, vertex threshold values $ t_A(u) \in \mathbb{Z^+}$ for all $u \in V(G)$. In the \IM{} problem under LTM, apart from the LTM parameters for each vertex, we have the interest value $\eta(u)$, where $0 < \eta(u)\le 1$. The value $\eta(u)$ shows how much the vertex $u$ is interested in the product's advertisement.
	
	The diffusion process for the \IM{} under LTM~\cite{cordasco2019active}  is as  follows:
	\begin{enumerate}
		\item Initially, all vertices are non-aware, i.e., the aware set $A=\phi$.
		\item Select a set of vertices $S$ as initial spreaders to start spreading the information.
		\item A non-aware vertex gets aware when it is a neighbor of at least one spreader vertex. When a vertex gets aware, it is added to the aware set $A$.
		\item A non-spreader vertex $u$ becomes a spreader when it is a neighbor of at least $t_A(u)$ number of spreaders.  
		\item The diffusion process is repeated until no more vertices change their state from non-spreader to spreader.
	\end{enumerate}
	
	The objective is to find a seed set $S$ of size $k$ that maximizes the sum of the interest values $I = \sum_{u \in A} \eta(u)$ under the Linear Threshold Model, where $A$ is the final set of aware (influenced) vertices for the seed set $S$.
	
	\subsubsection{Interest Maximization is NP-hard} \hfill\\
	\hfill\\
	We reduce the decision version of the Maximum Coverage Problem (MCP) to the \IM{} problem.
	The decision version of the Maximum Coverage Problem is as follows:\\

	\noindent\fbox{%
		\parbox{\textwidth}{%
			\textbf{Input: }  Universe of elements $U$, $m$ subsets $S = \{S_1,S_2,S_3 \cdots S_m \}$, and $k,l \in \mathbb{Z+}$.\\
			\textbf{Question: } Are there   $k$ subsets that cover at least $l$ elements? \\
		}%
	}

	For a given instance of the Maximum Coverage Problem (MCP), we construct an instance of the \IM{} problem as follows:
	\begin{enumerate}
		\item For each element $u_i \in U$, introduce a vertex labeled $u_i$.
		\item For each subset $S_i$, introduce a vertex labeled $S_i$.
		\item If $u_i\in S_j$ add  an edge between the vertices labeled $u_i$ and $S_j$.
	\end{enumerate}
	Note that the constructed graph $G'$ is bipartite. We set the influence and interest values of all the vertices of $G'$ to one. The activation threshold is set as $t_A(u)=deg(u)$ and $\eta(u)=1$ for all $u \in  V(G')$.

	
	\begin{lemma}
		There exist $k$ subsets that cover at least $l$ elements if and only if there is a seed set of size $k$ that influences at least $k+l$ vertices.
	\end{lemma}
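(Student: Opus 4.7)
The plan is to prove both directions of the biconditional by exploiting two structural features of the constructed bipartite graph $G'$: every vertex has influence threshold $t_I(v) = 1$ and activation threshold $t_A(v) = \deg(v)$. For the forward implication, I will take the $k$ subsets $S_{i_1}, \ldots, S_{i_k}$ covering at least $l$ elements and use the corresponding $k$ subset-type vertices as the seed set in $G'$. These $k$ initial spreaders are all aware, and every element $u \in U$ lying in at least one $S_{i_j}$ has a spreader neighbor; since $t_I(u) = 1$, such a $u$ becomes aware as well. By hypothesis at least $l$ such elements exist, so $|A| \geq k + l$.

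For the backward implication, I will start with a seed set $S$ of size $k$ whose aware set satisfies $|A| \geq k + l$ and construct a family $T$ of $k$ subsets covering at least $l$ elements of $U$. First, partition $S = S_s \cup S_e$ into subset-type and element-type seeds. Then build $T$ by including every subset corresponding to a vertex in $S_s$; for each element-type seed $u \in S_e$ with $\deg(u) \geq 1$, add one arbitrarily chosen subset $S_j$ containing $u$; and pad $T$ with unused subsets if necessary to reach size exactly $k$. I will then account for the aware set in three parts --- the $k$ seeds themselves, the non-seed elements adjacent to a spreader subset, and the non-seed subsets adjacent to a spreader element --- and show that each ``extra'' aware vertex (beyond the $k$ seeds) can be charged to a distinct element covered by $T$, yielding coverage at least $|A| - k \geq l$.

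The hardest part will be handling cascades. Because $t_A(v) = \deg(v)$, a non-seed vertex becomes a spreader only once every one of its neighbors is already a spreader, which can in principle trigger chain activations and inflate $|A|$ beyond the first-round count. The main obstacle is to verify that cascade-induced aware vertices still admit the charging scheme above: I plan to show that any vertex made aware via a chain of activations can be traced back through the bipartite graph to an element already covered by $T$, so the inequality between $T$'s coverage and $|A| - k$ survives all rounds of propagation.
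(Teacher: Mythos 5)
Your forward direction is exactly the paper's and is fine. The backward direction, however, has a gap that cannot be repaired: the charging of non-seed \emph{subset-type} aware vertices to distinct elements covered by $T$ fails. An element-type seed $u$ of degree $d$ makes all $d$ of its neighbouring subset-vertices aware (each is adjacent to the spreader $u$), contributing $d$ ``extra'' aware vertices to $|A|$; but $T$ gains only \emph{one} subset on account of $u$, namely the arbitrarily chosen $S_j \ni u$, whose coverage is $|S_j|$ and bears no relation to $d$. Concretely, take $U=\{u_1,\dots,u_{101}\}$ and $S_j=\{u_1,u_{j+1}\}$ for $j=1,\dots,100$. The seed set $\{u_1\}$ (so $k=1$) makes $u_1$ and all one hundred vertices $S_1,\dots,S_{100}$ aware, and no cascade occurs since $t_A(S_j)=\deg(S_j)=2$ while $S_j$ has only one spreader neighbour; hence $|A|=101=k+100$, yet no single subset covers more than $2$ elements. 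So $|A|-k$ can exceed the optimal coverage by an arbitrary amount, and no injective charging into covered elements exists.

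You should also know this is not a defect of your particular route. The paper's own backward argument --- swap each element-type seed for a neighbouring subset-type seed and assert that this ``does not affect the optimality of the solution because the graph is bipartite'' --- breaks on the same instance (replacing $u_1$ by any $S_j$ drops the aware count from $101$ to $3$), and the example shows the biconditional as stated is simply false for this construction. A fix must change what is counted rather than the accounting: e.g., set $\eta(S_j)=0$ on subset-type vertices and phrase the target as total interest, so that aware subset-vertices contribute nothing. Incidentally, your worry about cascades was aimed at the wrong place: the cascade analysis for element-type vertices does go through (a non-seed element that becomes aware must have a seed subset among its neighbours, since a cascade-activated subset would already require that element to be a spreader); the real obstruction is the first-round awareness of subset-type vertices adjacent to element-type seeds.
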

	\begin{proof}
		In the forward direction, if there are  $k$ subsets that cover at least $l$ elements, the vertices in $G'$ corresponding to the selected $k$ sets will be chosen as seed set in $G'$. As all these $k$ vertices influence at least $l$ vertices, 
		we can say that a seed set of size $k$ influences at least $k+l$ elements.
		
		In the backward direction, suppose we have a seed set $S$ containing elements of type  $S_i$ and $u_i$ that influence at least $k+l$ vertices. But our goal is to obtain a seed set having vertices of only $S_i$ type. Let the seed set $S$ contain some vertices $S_i$ from $S$  and vertices  $u_i$ in $U$ as shown in the \figurename~\ref{fig:np-hard}. For example, seed set is $S = \{S_1,S_3,S_4,u_2,u_5\}$. A vertex $u_i \in S$ can be swapped with one of its neighbors which are of $S_i$ type as shown in \figurename~\ref{fig:np-hard}. So, all vertices $u_i$ type can be swapped with neighboring $S_j$ type vertices, and it does not affect the optimality of the solution because the graph is bipartite. The seed set $S$ contains only the top vertices of the bipartite graph, as shown in \figurename~\ref{fig:np-hard}, and influences at least $k+l$ vertices. So, the seed set $S$ is the solution having $k$ subsets that cover at least $l$ elements.
	\end{proof}
	
	\subsubsection{LP-formulation } \hfill\\
	\hfill\\
	We propose the LP formulation of the \IM{} problem under the linear threshold model. $G(V, E, t_A, \eta)$ is a given graph, where $V$ denotes the set of vertices, $E$ is the set of edges, $t_A(u)$ is the activation threshold value of vertex $u$, and $\eta(u)$ is the interest value of vertex $u$. For all $u \in V$, if $u$ is influenced in at most $r$ rounds, $I_{u,r} =1$; otherwise, $I_{u,r} = 0$ . Similarly, for all $u \in V$, if $u$ becomes a spreader by $r$ rounds, then  $A_{u,r} = 1$;  otherwise, $A_{u,r} = 0$. 
	\begin{align}
		Objective: \text{Maximize}  \displaystyle\sum\limits_{ u\in V } I_{u,n}*\eta(u) & & & \label{eq:1}\\
		\text{subject to the constraints} \sum\limits_{u \in V} A_{u,0}\le k ,  & &  \label{eq:2}\\
		t_A(u)*(A_{u,r}-A_{u,0}) \leq \sum\limits_{v \in N(u) }  A_{v,r-1},  & &\forall u \in V , r\in [1,n] \label{eq:3}\\
		I_{v,n} \leq (\sum\limits_{u \in N[v] } A_{u,n} + A[v,0] ) *I_{v,n}, & & \forall v \in V \label{eq:4} \\
  \sum\limits_{u \in N[v] } A_{u,n} + A[v,0] \leq (\sum\limits_{u \in N[v] } A_{u,n} + A[v,0] ) *I_{v,n}, & & \forall v \in V\label{eq:5}
	\end{align}
	
	The objective is to maximize the objective function  (\ref{eq:1}).  
	Equation (\ref{eq:2}) forces to select at most $k$ vertices as initial spreaders. 
	As and when the vertex $u$ has $t_A(u)$ number of spreader neighbors, Equation (\ref{eq:3}) forces  $A_{u,r}$ to be $1$ except for initial spreader vertices $A_{u,0}$, thus making $u$ a spreader node in the $r^{th}$ iteration where $r \in [1,n]$.  If at least a neighbor of $v$ is active till $n$ rounds, then equations (\ref{eq:4}) and (\ref{eq:4}) forces the $I_{v,n}=1$ (Equation (\ref{eq:4}) forces $I_{v,n}=1$ if $\sum\limits_{u \in N[v] } A_{u,n} + A[v,0]$ is equal to $1$. But this equation also satisfies when $\sum\limits_{u \in N[v] } A_{u,n} + A[v,0] =1$ and  $I_{v,n}=0$ which is incorrect. Equation (\ref{eq:5}) supports this condition and forces $I_{v,n}=1$).

    We use the Gurbi~\cite{gurobi} package to solve Linear Programming Equations. The results are as follows. For the dataset  Karate~\ref{tab:graphs}:
    \begin{enumerate}
        \item If $k=1$, then the sum of interest is $11.22$.
        \item If $k=2$, sum of interest is $14.47$.
        \item for $k=3$, sum of interest is $15.94$ and all vertices get influenced.
    \end{enumerate}
    For the dataset  Jazz~\ref{tab:graphs}:
    \begin{enumerate}
        \item If $k=5$, then the sum of interest is $86.64$.
        \item If $k=10$, sum of interest is $96.64$.
        \item for $k=12$, sum of interest is $97.44$.
        \item for $k=13$, sum of interest is $97.46$ and all vertices get influenced.
    \end{enumerate}

	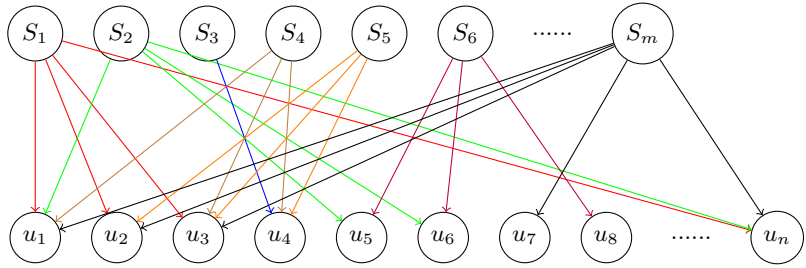
\begin{figure}[!htbp]
		\centering
		\begin{tikzpicture}[
			node distance =20mm and 4mm,
			C/.style = {circle, draw, minimum size=1em},
			S/.style = {circle,fill=black!25,draw, minimum size=1em},
			every edge quotes/.style = {auto, font=\footnotesize, sloped}
			] 
			
			\node (0) [C, ] {$S_1$};
			\node (1) [C,  right = of 0] {$S_2$};
			\node (2) [C,  right=of 1] {$S_3$};
			\node (3) [C,   right=of 2] {$S_4$};
			\node (4) [C,   right=of 3] {$S_5$};
			\node (5) [C,  right= of 4] {$S_6$};
			\node (16)[rectangle,  right= of 5] {......};
			\node (6) [C,   right=of 16] {$S_{m}$};
			\node (7) [C,  below =of 0] {$u_1$};
			\node (8) [C,   right=of 7] {$u_2$};
			\node (9) [C,   right=of 8] {$u_3$};
			\node (10) [C,  right=of 9] {$u_4$};
			\node (11) [C,  right=of 10] {$u_5$};
			\node (12) [C,  right=of 11] {$u_6$};
			\node (13) [C,  right=of 12] {$u_7$};
			\node (14) [C,  right=of 13]{$u_8$};
			\node (17)[rectangle,  right= of 14] {......};
			\node (15) [C,   right=of 17] {$u_n$};
			\foreach \x in {0}
			\foreach \y in {7,8,9,15}
			{
				\draw[red] (\x)--(\y) ;
			}
			
			\foreach \x in {1}
			\foreach \y in {7,12,15,11}
			{
				\draw[green](\x)--(\y) ;
			}
			\foreach \x in {2}
			\foreach \y in {10}
			{
				\draw[blue](\x)--(\y) ;
			}
			
			\foreach \x in {3}
			\foreach \y in {7,9,10}
			{
				\draw[brown](\x)--(\y) ;
			}
			
			\foreach \x in {4}
			\foreach \y in {10,9,8}
			{
				\draw[orange](\x)--(\y) ;
			}
			
			\foreach \x in {5}
			\foreach \y in {14,11,12}
			{
				\draw[purple](\x)--(\y) ;
			}
			\foreach \x in {6}
			\foreach \y in {7,8,9,15,13}
			{
				\draw[](\x)--(\y) ;
			}
			
		\end{tikzpicture}
		\caption{The sets $S_1,S_2,S_3,S_4,S_5,S_6,$ and $S_m$ cover elements $u_1,u_2,,u_3,\cdots$ and $u_n$. The figure shows that each set $S_j$($1\le j \le m)$ covers neighboring vertices (depicted by edges of the same color). }
		\label{fig:np-hard}
	\end{figure}

	
	
	\subsection{Interest Maximization under the ICM}
	We have a directed graph $G(V, E)$ with weights on edges, and interest value $\eta(u)$ is associated with each node $u$. A vertex $u$ activates vertex $v$ with probability $p(u,v)$, where $p(u,v)$ is the weight on edge $(u,v)$. Once a vertex $v$ becomes active (spreader), $v$ can try to activate its neighbors once, and the spreader vertex does not change its state again. 
	
	The diffusion under the ICM model is as follows:
	\begin{itemize}
		\item Initially, all vertices $u \in V$ are in an inactive state (non-spreader vertex).
		\item Select seed set $S \subseteq V$ of given size $k$, and all the vertices in $S$ are assumed to be in active state. Initialize the set of active vertices $A$ as  $A = S$.
		\item An active vertex $u$ activates its out-neighbors $v$ with probability $p(u,v)$. Once a vertex $u$ changes the state from inactive to active, add $u$ to $A$. A vertex can activate another vertex only once.
		\item The diffusion process stops when no more vertices change their state from inactive to active.
	\end{itemize}
	
	The objective is to find a seed set $S$ of size $k$ that maximizes the interest value  $\sum_{u \in A} \eta(u)$,  where $A$ is the set of final active vertices. 

 \textbf{We propose four heuristics and their time complexity analysis for \IM{} problem are discussed in the next section.} 
	
	\section{Proposed Heuristics}
	\label{sec:proposedHeuristics}
	We discuss four heuristics in this section. The level-based greedy and degree-based heuristics select initial spreaders based on the maximum number of neighbors and degrees, respectively. The rest of the heuristics select the spreader vertex with the highest profit iteratively. The heuristic algorithms are discussed below.
	
	\subsection{Level Based Greedy Heuristic (\LB)}
	\label{algo:h-Level-wise}
	
	\begin{figure}
		\centering
		\includegraphics[scale=0.1]{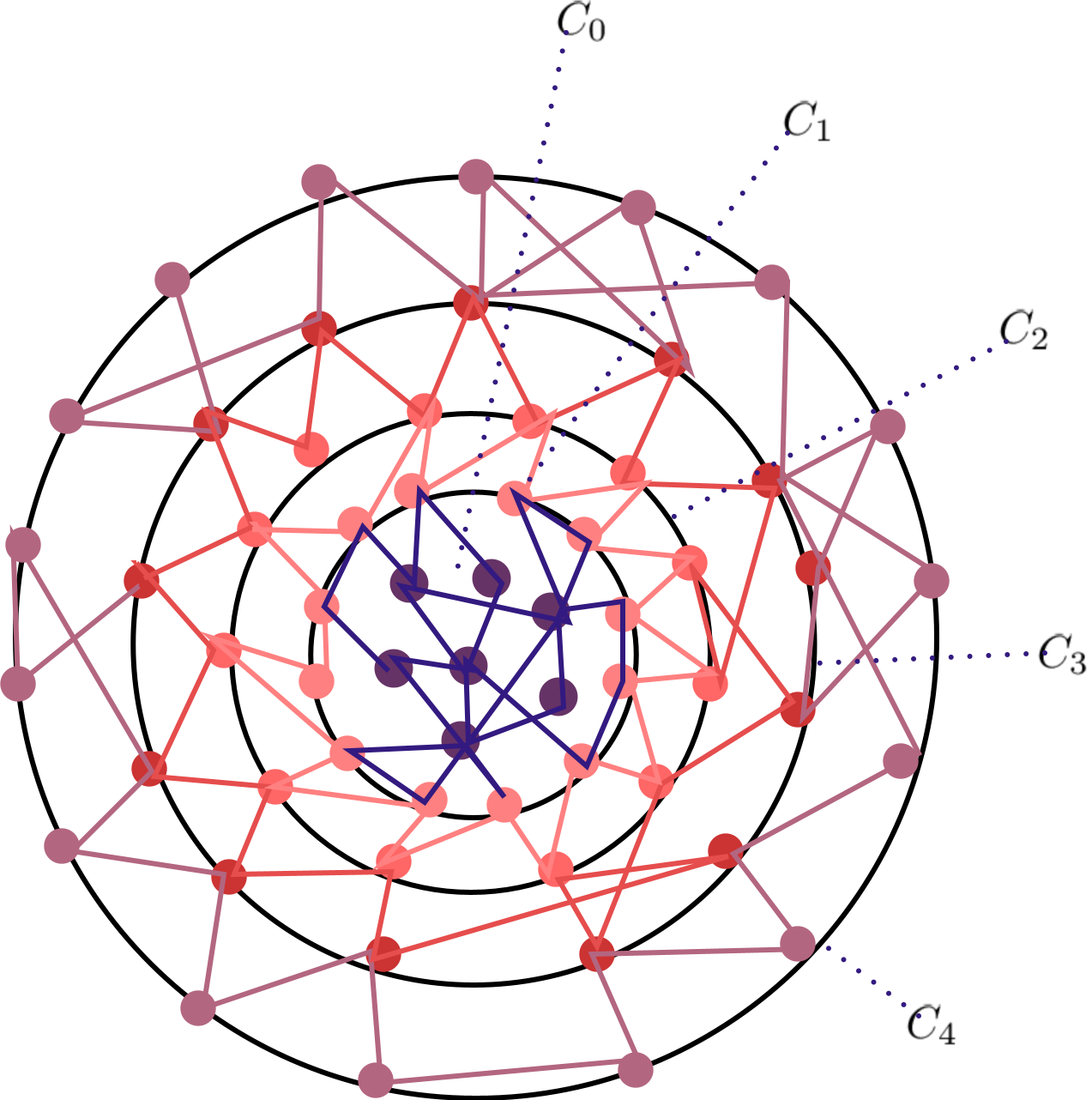}
		\caption{Level-wise, select highly influenced vertices from the graph.}
		\label{fig:level-wise}
	\end{figure}
	The \LBE{} (\LB{}) in turn calls three procedures: the \LBIF{}, \spr{} and \df{}.
	\LBIF{} procedure (as shown in \algorithmcfname~\ref{algo:levelBasedinfluence}) computes and returns a specific list $L^*$, which is a permutation of the vertex set $V(G)$. The permutation is generated by considering the degree and the interest values of the vertices as described below. 
	
	Let $L_0=L[0,1,2,3\cdots n/2]$ be the first half of the vertices of the list $L$, where $L$ is the sorted list of the vertex set $V(G)$ in decreasing order of interest values. We consider $L_0$ to be core vertices that are at level zero. As shown in \figurename~\ref{fig:level-wise-example}, we do a level order traversal of the graph $G$ based on the distance of the vertices from level zero.  For $1\le i \le e(L_0)$, $L_i= N(L_i) \backslash \cup_{j=i-1}^{j=0} L_j$, where $N(L_i)$ is the set of the open neighborhood of the vertices in $L_i$ and $e(L_0)$ is the maximum eccentricity of the vertices in the list $L_0$. The leveling of the graph $G$ continues till all the vertices of the graph $G$ are covered. These levels of $G$ are depicted in \figurename\ref{fig:level-wise} as concentric circles with the innermost circle representing level $L_1$ and center with core vertices in list $L_0$.
	
	Now, we sort the list of $L_i$ in decreasing order according to the degree of the vertices. For each index $j = 0,1,2,3 \cdots $, we pick the vertices located at the index $j$ of each list $L_i$, and these vertices are sorted in decreasing order of the degree and appended to the list $L^*$. The same process is repeated for each index $i=1,2,3 \cdots $. As the lists may not be of  $j+1$ length, only if the vertex is available at an index $j$ the vertex is picked. This process aims to give equal importance to high-degree vertices at each level.

	The \spr{}  function, as given in the Algorithm~\ref{algo:spreader}, takes as input an ordered list of vertices $L^*$ and \GR{}. $A$ and $S$ are the aware and seed sets, respectively. Initially, the seed set and aware set are empty. For each vertex $u\in L^*$,  if \df{} results in an increase in the number of influenced vertices with the seed set $S\cup \{u\}$, add $u$ to $S$ and decrease the number of required initial spreaders or seeds $k$ by one. Otherwise, ignore vertex $u$. The \spr{} procedure returns a seed set $S$. 
	
	The \df{} function, as given in the Algorithm~\ref{algo:diffusion} under LTM, returns a set of aware nodes $A$. The inputs for the \df{} function are graph $\GR{}$ and the seed set $S$. The \df{} function diffuses the information using the LTM or ICM and returns the set of aware nodes. 
	
	\textbf{Time Complexity Analysis:} The diffusion function takes time $O(|V|+ |E|)$, and the time complexity of sorting the vertices in decreasing order is $O(|V| \log(|V|))$. To find the eccentricity of the vertices in $L_0$ is computed in time at most $O(|V|+|E|)$ because we marge all vertices of $L_0$ into a single vertex and run a single breath first search. The leveling visits each vertex at once from the source vertices in $L_0$. It is equivalent to running a breadth-first search algorithm.  The rest of the algorithm sorts and prepares a list $L^*$ which is at most $O(|V| + |E|)$. The spreader function prepares the seed set by calling the diffusion function at most $|V|$ times. So, the time complexity of the heuristic is $O (|V|*(|V|+|E|))$.

	\begin{figure}
		\centering
		\begin{tikzpicture}[
			node distance =10mm and 5mm,
			C/.style = {circle, draw, minimum size=1em},
			S/.style = {circle,fill=black!25,draw, minimum size=1em},
			every edge quotes/.style = {auto, font=\footnotesize, sloped}
			] 
			
			\node (1) [C, fill=red!35, label=1] {1};
			\node (0) [C,fill=red!35, label=1, above= of 1] {0};
			\node (2) [C,fill=red!35, label=0.9, above right=of 1] {2};
			\node (3) [C, fill=red!35,label=0.7, below right=of 2] {3};
			\node (4) [C, fill=red!35,label=0.6, above right=of 2] {4};
			\node (5) [C,fill=red!35, label=0.6, right=of 2] {5};
			\node (6) [C, fill=red!35,label=0.6, below right=of 3] {6};
			\node (7) [C,fill=red!35,  label=0.5, below right=of 1] {7};
			\node (8) [C, fill=yellow!50,label=0.5, above right=of 5] {8};
			\node (9) [C, fill=yellow!50,label=0.5, right=of 5] {9};
			\node (10) [C, fill=yellow!50,label=0.4, right=of 3] {10};
			\node (11) [C, fill=green!50,label=0.3, right=of 9] {11};
			\node (12) [C, fill=yellow!50,label=0.3, right=of 10] {12};
			\node (13) [C, fill=green!50,label=0.3, below right=of 10] {13};
			\node (14) [C, fill=green!50,label=0.2, above right=of 11] {14};
			\node (15) [C, fill=green!50,label=0.2,  right=of 12] {15};
			\foreach \x in {0}
			\foreach \y in {2,4}
			{
				\draw (\x)--(\y) ;
			}
			
			\foreach \x in {1}
			\foreach \y in {2,5,3,7}
			{
				\draw (\x)--(\y) ;
			}
			\foreach \x in {2}
			\foreach \y in {3,5,7,4}
			{
				\draw (\x)--(\y) ;
			}
			
			\foreach \x in {3}
			\foreach \y in {7,9,10,6}
			{
				\draw (\x)--(\y) ;
			}
			
			\foreach \x in {5}
			\foreach \y in {10,9,4,8}
			{
				\draw (\x)--(\y) ;
			}
			
			\foreach \x in {9}
			\foreach \y in {14,11,12}
			{
				\draw (\x)--(\y) ;
			}
			\foreach \x in {12}
			\foreach \y in {15,13,6}
			{
				\draw (\x)--(\y) ;
			}
			
		\end{tikzpicture}

		\caption{ The vertices at level-0 $L_0=[0,1,2,3,4,5,6,7]$, at level-1 $L_1= [8,9,10,12]$, and at level-2 $L_2 = [11,13,14,15]$.}
		
	\label{fig:level-wise-example}
\end{figure}
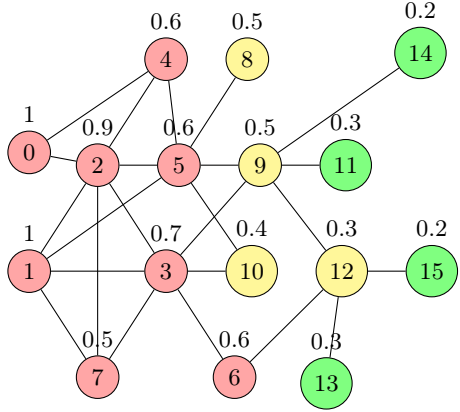

\begin{algorithm}
	\DontPrintSemicolon
	\SetKwInOut{Input}{Input}\SetKwInOut{Output}{Output}
	\Input   {Graph $\GR{} $ and Seed set $S$.}
	\Output  {The aware or influenced set $A$.}
	\SetKwFunction{FMain}{DIFFUSION}
	\SetKwProg{Fn}{Function}{}
	\Fn{\FMain{$G,S$}} {
		\Begin{
			$Q \gets [\; ]$\;
			$A \gets \phi$\;
			\For{$v \in S $}{
				$append(Q, v)$\;
			}
			\While {$empty(Q)=False$}
			{
				$v = removeFirst(Q)$\;
				\For {$w \in N[v]$}{
					\tcp{ $N[v] $ is the set of all neighbors of $v$ including $v$.}
					\If{ $ w \notin A $}{
						$A \gets A \cup \{w\}$\;
					}
					$S' \gets \{ x | x \in N[w] \cap S \}$\;
					\If {$t_A(w)\le |S'| $ and $ w \notin S$}
					{
						$S \gets S  \cup \{w\}$\;
						\tcp{ $w$ becomes the spreader.}
						$append(Q, w)$\;
					}
				}
			} 
			\KwRet{$A$}
		}		
	}
	\caption{Diffusion Function under LTM.}
	\label{algo:diffusion}
\end{algorithm}

\begin{algorithm}
	\DontPrintSemicolon
	\SetKwInOut{Input}{Input}\SetKwInOut{Output}{Output}
	\Input   {$\GR{} $.}
	\Output  {The seed set $S$.}
	\SetKwFunction{FMain}{Spreader}
	\SetKwProg{Fn}{Function}{}
	\Fn{\FMain{$G, L,k $}} {
		\Begin{
			$S\gets \phi$\;
			$A \gets \phi$\;
			$A \gets \text{DIFFUSION}(G,S)$\;
			\For{$u \in L$}{
				\tcp{ DIFFUSION function returns the set of aware vertices $A$ with the seed set $S \cup \{u\}$.}
				$A' \gets \text{DIFFUSION}(G,(S\cup \{u\}))$\;
				
				\If{$|A'|>|A|$}{
					${S} \gets {S} \cup \{u\}$\;
					$A \gets A'$\;
					$k \gets k -1$\;
				}
				\If{$k<0$}{
					\KwRet{${S}$}
				}
			}
		}
	}
	\caption{Spreader function finds the seed set $S$.}
	\label{algo:spreader}
\end{algorithm}

\begin{algorithm}
	\DontPrintSemicolon
	\SetKwInOut{Input}{Input}\SetKwInOut{Output}{Output}
	\Input   {Graph $\GR{}$}
	\Output  {A list of vertices in decreasing order of influential strength.}
	\SetKwFunction{FMain}{\LBIF{}}
	\SetKwProg{Fn}{Function}{}
	\Fn{\FMain{$G$}} {
		\Begin{
			$L \gets \text{list of vertices of V(G) in decreasing order of interest values}$\;
			$L_0 \gets L[0,\cdots n/2]$\;
			$L_1 \gets L[n/2+1,\cdots n]$\;
			\For{$i=1$ to $e(L_0)$}
			{
				\tcp{ $e(L_0)$ is the eccentricity of the vertices in $L_0$ by merging all vertices into the single vertex.}
				$L_i \gets  N(L_i) \backslash \cup_{l=i-1}^{l=0} L_l$\;
				$L_i \gets sort\_by\_degree\_dec(G,L_i)$\;
				
			}
			$l \gets max(\{|L_i| : i \in [0,e(L_0)] \; \text{ and } i \in \mathbb{Z}^+\})$\;
			$L^* \gets [\;]$\;
			\For{$j=0$ to $j = l-1$}{
				$temp \gets \phi$\;
				\For{$i=0$ to $i = e(L_0)$}{
					\If{$j< |L_i|$}{
						$temp \gets temp \cup L_i[j]$\;
					}
				}
				$temp \gets sort\_by\_degree\_dec(G,temp)$\;
				\ForEach{$u \in temp$}{
					$L^*.append(u)$
				}
				
			}

			\KwRet{$L^*$}
		}	
		
	}
	
	\caption{Level-Based-Influence.}
	\label{algo:levelBasedinfluence}
\end{algorithm}

\begin{algorithm}
	\DontPrintSemicolon
	\SetKwInOut{Input}{Input}\SetKwInOut{Output}{Output}
	\Input   {$\GR{}$, and the positive integer $k\in \mathbb{Z}^+$.}
	\Output  {The sum of interest value associated with influenced vertices with seed set $S$.}
	\SetKwFunction{FMain}{Level-Based-Greedy}
	\SetKwProg{Fn}{Function}{}
	\Fn{\FMain{$G,k$}} {
		\Begin{
			$L^* \gets \textbf{\LBIF{}(G)}$\;
			$S\gets \textbf{\spr{}(G,$L^*$)}$\;
			$A \gets \textbf{\df{}(G,S)}$\;
			\KwRet{$S,\sum_{u\in A} \eta(u)$}   
		}
	}
	\caption{\LBE{}.}
	\label{algo:level-wise-profit-main}
\end{algorithm}

\subsection{Maximum Degree First Heuristic (\MD{})}
\label{algo:h-max-degree-first}
A vertex having a higher degree can activate or influence more vertices. So, we sort the vertices in decreasing order of degree. Let the sorted array be $L$. Initially, all the vertices of the graph are non-aware. The aware set $A$ and the seed set $S$ are empty. For each vertex $u \in L$, if the \df{} function influences more additional vertices with seed set $S \cup \{u\}$, then add $u$ to $S$. Otherwise, ignore the vertex $u$. Once the size of the seed set reaches $k$, the process stops, and the set of influenced nodes $A$ for seed set $S$ is recorded. The Max-Degree First function returns $\sum_{u\in A} \eta(u)$.     

\textbf{Time Complexity Analysis:} The two major tasks of \MD{} are sorting the vertices in decreasing order which takes ($O(|V| \log(|V|)$) time and preparation of the seed set $S$ takes ($O(|V|*(|V|+|E|))$) time. So, the time complexity of the heuristic is  $O(|V|*(|V|+|E|))$.

\subsection{Profit Based Greedy Heuristic (\PB{})}
\label{algo:h-profit-based}
In this approach, before selecting the seed node $u$, compute the profit of the vertex $u$.  The procedure for computing the profit of the vertices is:  $N[u]$ is the set of closed neighbors of the vertex $u$, and the profit is $\sum_{v\in N[u]\backslash A} \eta(v)$, where $A$ is the current set of influence nodes. The non-activated vertex $v_p \in V$ with the maximum profit is added to the seed set $S$. 

As given in the Algorithm~\ref{algo:Profit-based-greedy}, the input parameters are graph \GR{} and positive integer $k$. The objective is to select a  $k$-size seed set $S$ that maximizes the sum of interest of influenced vertices. Initially, all the vertices of graph $G$ are in the non-aware state. The aware set $A$ and the seed set $S$ are empty. The algorithm iterates $k$ times for finding $k$ seeds. In each iteration, the profit on each vertex $u$ is calculated as $p(u) = \sum_{ v \in N[u]  \backslash A} \eta(v)$, where $p(u)$ is the profit on vertex $u$, and $N[u]$ is the set of closed neighbors of the vertex $u$. The maximum profitable vertex is selected as the seed node and added to the seed set $S$. The \df{} function diffuses the information with the seed set $S$ and appends all the influenced or aware vertices to $A$. We repeat the above steps up to $k$ times and find the seed set of size $k$. The \PBE{} returns the seed set as well as the sum of interest value associated with influenced vertices as $\sum_{u\in A} \eta(u)$. 

\textbf{Time Complexity Analysis:}  For each vertex $u \in V$, calculate the profit, find the maximum profitable vertex, and run diffusion. The profit formula is  $\forall u \in V$, $p(u)= \sum _{u \in N(u)\backslash A} \eta(u)$.  Calculation of the profit of vertices takes time at most $O(|V|*\Delta(G))$, where $\Delta(G)$ is the maximum degree of the graph $G$. These three tasks are completed in time $(|V|*\Delta(G))$, $(|V|)$, and $(|V| + |E|)$. The total time consumed by the algorithm is $O(|V| * (|V|+|E|))$.     




\begin{algorithm}
	\DontPrintSemicolon
	\SetKwInOut{Input}{Input}\SetKwInOut{Output}{Output}
	\Input   {$\GR{} $ and positive integer $k\in \mathbb{Z}^+$.}
	\Output  {The seed set $S$ and the sum of interest  of  aware/influenced vertices.}
	\SetKwFunction{FMain}{\PBGF{}}
	\SetKwProg{Fn}{Function}{}
	\Fn{\FMain{$G,k $}} {
		\Begin{
			$S\gets \phi$\;
			$A \gets \phi$\;
			\For{ $i = 1$ to $k$}{
				$A \gets \df{}(G,S)$\;
				$v_{p}\gets -1$\;
				\For{$u \in V$}{
					\tcp{ DIFFUSION function returns set of aware vertices $A$ with seed set $S$.}
					$p(u) \gets \sum_{ v \in N(u) \text{ and } v \notin A}\eta(v) $\;
					
					\If{$m>p(u)$}{
						$v_p \gets u$\;
						$m  \gets p(u)$\;
					}
				}
				$S \gets S \cup \{ v_p \} $\;
			}
			$A \gets \df{}(G,S)$\;
			\KwRet{$S,\sum_{ u \in A }\eta(u)$}
		}
	}
	\caption{\PBE{}}
	\label{algo:Profit-based-greedy}
\end{algorithm}

	\subsection{Maximum Profit Based Greedy Heuristic (\MP{})}
	\label{algo:exp-five}
	As seen in the Algorithm~\ref{algo:Profit-based-greedy}, the highest profitable vertex is selected as a seed node in the Profit-Based Greedy Heuristic. The Profit-Based Greedy computes the incremental sum of interest in the neighborhood. The formula for the profit is $p(u) = \sum_{ v \in N[u] \backslash A} \eta(v)$ in the Profit Based Greedy Heuristic. $p(u)$ is the profit up to one hop. But we know when the seed set $S$ size is increased by adding a vertex $u$, the vertices that are more than one hop can be influenced. So, we extend the formula for the profit. For each $u \in V$, the \df{} function is run with the seed set $S \cup \{u \}$ to obtain the set of influenced vertices $A$ and $ p(u)= \sum_{ v \in A} \eta(v)$. The maximum profitable vertex $u_p$ is selected as the seed node and is added to $S$. The seed set $S$ of size $k$ is obtained by repeating the above process $k$ times. The \MPBF{} procedure returns a seed set and  $\sum_{ v \in A} \eta(v)$.
	
	\textbf{Time Complexity Analysis:} To select a seed vertex,  the time taken by the \MP{} is $O(|V| * (|V|+|E|))$ and moreover for $k$ seed vertices, the time complexity is $O(k* (|V| * (|V| + |E|)))$.
	

	
	
	\begin{algorithm}
		\DontPrintSemicolon
		\SetKwInOut{Input}{Input}\SetKwInOut{Output}{Output}
		\Input   {$\GR{} $ and positive inter $k$.}
		\Output  {The seed set $S$ and the sum of interest  of  aware/influenced vertices.}
		\SetKwFunction{FMain}{\MPBF{}}
		\SetKwProg{Fn}{Function}{}
		\Fn{\FMain{$G,k $}} {
			\Begin{
				$S\gets \phi$\;
				$A \gets \phi$\;
				\For{ $i = 1$ to $k$}{
					$v_p \gets -1$\;
					$m \gets 0$\;
					\For{$u \in V$}{
						\tcp{ \df{} function returns a set of aware vertices $A$ by seed set $S$.}
						$A \gets \text{\df{}}(G,S\cup\{u\})$\;
						$p(u) \gets \sum_{ v \in A}\eta(v) $\;
						
						\If{$m<p(u)$}{
							$v_p \gets u$\;
							$m  \gets p(u)$\;
						}
					}
					$S \gets S \cup \{ v_p \} $\;
				}
				$A \gets \df{}(G,S)$\;
				\KwRet{$S,\sum_{ u \in A }\eta(u)$}
			}
		}
		\caption{\MPE{}}
		\label{algo:max-profit-based}
	\end{algorithm}

	\textbf{In the next section, we analyze the performance of heuristics from the perspective of results and computation time of proposed heuristics.}
	
	\section{Results and Discussion}
	\label{sec:resultanddiscussion}
	We implement the \LBE{} (\LB{}),  \MDE{} (\MD{}),  \PBE{} (\PB{}), and \MPE{} (\MP{}) in the programming language Python. The system specifications are  Macbook Pro (2016) with 8GB RAM, 2.7Hz processor speed, and a hard disk space of 256GB.  The datasets obtained from the different sources are power~\cite{2015MarkNewMan}, BlogCatalog~\cite{reza2009social}, CA-HepTh~\cite{nr}, facebook~\cite{snapnets}, CA-GrQc~\cite{nr}, and CA-HepPh~\cite{nr}. We test all the heuristics on these data sets and tabulate the results.  
	
	The basic details of the data sets, along with their properties like the number of nodes, number of edges, density, average degree, and average clustering coefficient, are given in \tablename~\ref{tab:graphs}.
	
	We run all our heuristic algorithms with the diffusion models LTM and ICM. 
	The interest values $\eta$ of the nodes of the graph are generated randomly in the range $(0,1]$.

	\begin{table}[!h]
		\centering
		\caption{Data sets~\cite{2015MarkNewMan,snapnets,reza2009social,nr} and their network properties.}
		\scriptsize
		\begin{tabular}{p{2.5cm} p{1.5cm} p{1.5cm} p{1.5cm} p{2cm} p{1.5cm} p{1cm}}
			\hline
			Network & Nodes  & Edges & Density & Avg-Degree & Avg-CC \\
			\hline
			\hline
			power  &  4941  &  6594  &  0.0005   &  2.66  &  0.08 \\ \hline
			BlogCatalog  &  10312  &  333983  &  0.0063   &  64.77  &  0.46 \\ \hline
			CA-HepTh  &  9877  &  25998  &  0.0005   &  5.26  &  0.47 \\ \hline
			Karate  &  34  &  78  &  0.139  &  3.9706  &    0.571 \\ \hline
			facebook  &  4039  &  88234  &  0.0108   &  43.69  &  0.60 \\ \hline
			CA-GrQc  &  5242  &  14496  &  0.0011   &  5.53  &  0.53 \\ \hline
			CA-HepPh  &  12008  &  118521  &  0.0016   &  19.74  &  0.61 \\ \hline
			jazz  &  198  &  2742  &  0.1406  &  271.197  &  0.6 \\ \hline
		\end{tabular}
		\label{tab:graphs}
	\end{table}
	
	\subsection{Interest Maximization under the Linear Threshold Model }
    We compare our heuristics with ~\cite{2023RahulCent}. In CBH~\cite{2023RahulCent}, the seed vertex is picked based on centrality measures. However, the sum of the interests of neighbors may be higher than the centrality score of the vertex. So, \PB{} and \MP{} will outperform CBH~\cite{2023RahulCent}.
    
	The LTM is implemented with two thresholding mechanisms : $(i)$ $t_A(u) = \lceil deg(u)*0.5\rceil$, and $(ii)$ $t_A(u)= \lceil deg(u)*(1-\eta(u)) \rceil$, where $\eta(u)$ is the interest value of $u$ . The results of the proposed algorithms under $(i)$ are compared with \CB{}~\cite{2023RahulCent} by adapting the heuristic CBH to \IM{} problem, and the results are tabulated in \tablename~\ref{tab:threshold-without-related-interest} which are also plotted in \figurename~\ref{fig:threshold-without-related-interest}.
	Similarly, the results of the proposed algorithms under mechanism $(ii)$ are shown in the \tablename~\ref{tab:threshold-related-interest}, which are plotted in \figurename~\ref{fig:threshold-related-interest}. The

	\setlength\tabcolsep{1.2pt}
	\newcommand*\rot[1]{\rotatebox{90}{#1}}
	\renewcommand{\arraystretch}{1.25}

	\begin{table}[!htbp]
		
		\caption{The sum of interest values associated with influenced vertices with mechanism $t_A(u)=\lceil deg(u)*0.5\rceil$  (setting (i)) under the LTM model. }
		\centering
		\begin{tabular}{  c c c c c c c c c c c c } 
			\hline
			\multirow{2}{*} {Dataset} & \multirow{2}{*}{Algo.} & \multicolumn{10}{c}{Seed Set Size $10-100$}\\
			\cline{3-12}
			&  & \textbf{10} & \textbf{20} & \textbf{30} & \textbf{40} & \textbf{50} & \textbf{60} & \textbf{70} & \textbf{80} & \textbf{90} & \textbf{100} \\
			\hline
			
			\multirow{5}{*}{BlogCatalog}&   \CB{}~\cite{2023RahulCent} &  4358.3 &  4734.6 &  4852.0 &  4900.9 &  4961.6 &  5010.5 &  5029.5 &  5041.7 &  5071.9 &  5081.8\\
			\cline{2-12}
			&  \LB{} &  4058.0 &  4523.7 &  4715.5 &  4773.2 &  4849.0 &  4899.1 &  4919.5 &  4935.9 &  5020.4 &  5035.1\\
			\cline{2-12}
			&  \MD{} &  4288.3 &  4666.6 &  4769.1 &  4811.2 &  4848.7 &  4881.8 &  4921.3 &  5008.3 &  5024.2 &  5035.0\\
			\cline{2-12}
			&  \PB{} &  4356.4 &  4774.8 &  4923.7 &  4995.1 &  5034.5 &  5065.0 &  5086.1 &  5103.7 &  \textbf{5119.1} &  \textbf{5132.3}\\
			\cline{2-12}
			&  \MP{} &  \textbf{4356.4} &  \textbf{4774.8} &  \textbf{4923.7} &  \textbf{4995.1} &  \textbf{5034.8} &  \textbf{5065.0} &  \textbf{5086.6} &  \textbf{5104.7} &  \textbf{5119.7} &  \textbf{5132.7}\\
			
			\hline
			\multirow{5}{*}{CA-GrQc} &  \CB{}~\cite{2023RahulCent} &  183.8 &  306.4 &  406.6 &  500.5 &  577.3 &  636.9 &  687.0 &  743.6 &  795.3 &  848.6\\
			\cline{2-12}
			&  \LB{} &  89.5 &  178.1 &  265.6 &  343.8 &  433.8 &  501.4 &  571.7 &  637.1 &  687.2 &  737.2\\
			\cline{2-12}
			&  \MD{} &  105.8 &  156.0 &  240.7 &  309.5 &  409.8 &  504.9 &  567.3 &  654.0 &  737.1 &  788.5\\
			\cline{2-12}
			&  \PB{} &  221.3 &  363.0 &  482.2 &  582.0 &  669.1 &  745.6 &  816.8 &  881.5 &  945.3 &  999.0\\
			\cline{2-12}
			&  \MP{} &   \textbf{222.0} &  \textbf{366.9} &  \textbf{483.4} &  \textbf{585.8} &  \textbf{674.6} &  \textbf{754.2} &  \textbf{826.8} &  \textbf{893.5} &  \textbf{954.8} &  \textbf{1012.6}\\
			
			\hline
			\multirow{5}{*}{CA-HepPh}  &  \CB{}~\cite{2023RahulCent} &  599.6 &  938.1 &  1172.0 &  1330.4 &  1457.7 &  1605.1 &  1728.0 &  1804.8 &  1884.5 &  1952.7\\
			\cline{2-12}
			&  \LB{} &  536.4 &  681.6 &  781.6 &  859.0 &  903.2 &  947.8 &  1011.7 &  1096.3 &  1175.4 &  1198.8\\
			\cline{2-12}
			&  \MD{} &  499.7 &  633.0 &  730.5 &  789.4 &  835.0 &  883.8 &  925.1 &  941.5 &  970.2 &  1036.4\\
			\cline{2-12}
			&  \PB{} &  815.2 &  1147.2 &  1404.8 &  1604.3 &  1772.1 &  1922.1 &  2051.5 &  2168.1 &  2279.8 &  2381.3\\
			\cline{2-12}
			&  \MP{} &  \textbf{815.6 }&  \textbf{1147.2} &  \textbf{1404.8} &  \textbf{1605.5} &  \textbf{1775.3} &  \textbf{1922.3} &  \textbf{2054.6} &  \textbf{2173.2} &  \textbf{2283.9} &  \textbf{2387.3}\\
			
			\hline
			\multirow{5}{*}{CA-HepTh}  &  \CB{}~\cite{2023RahulCent} &  224.1 &  373.4 &  515.3 &  627.3 &  732.7 &  817.5 &  898.9 &  975.4 &  1061.7 &  1136.6\\
			\cline{2-12}
			&  \LB{} &  217.9 &  361.1 &  480.1 &  575.3 &  663.2 &  755.7 &  827.2 &  879.0 &  920.2 &  956.7\\
			\cline{2-12}
			&  \MD{} &  233.6 &  364.3 &  465.0 &  566.1 &  676.0 &  735.0 &  828.3 &  872.9 &  931.3 &  993.4\\
			\cline{2-12}
			&  \PB{} &  257.5 &  434.0 &  578.3 &  709.4 &  824.4 &  933.2 &  1035.1 &  1129.5 &  1220.6 &  1306.2\\
			\cline{2-12}
			&  \MP{} &  \textbf{257.5} &  \textbf{436.7} &  \textbf{582.2} &  \textbf{709.4} &  \textbf{826.1} &  \textbf{935.8} &  \textbf{1039.3} &  \textbf{1137.0 }&  \textbf{1229.1} &  \textbf{1315.2}\\
			
			\hline
			\multirow{5}{*}{facebook} &  \CB{}~\cite{2023RahulCent} &  \textbf{1987.5} &  \textbf{1987.5} &  \textbf{1987.5} &  \textbf{1987.5} &  \textbf{1987.5} &  \textbf{1987.5} &  \textbf{1987.5} &  \textbf{1987.5} &  \textbf{1987.5} &  \textbf{1987.5}\\
			\cline{2-12}
			&  \LB{} &  1883.3 &  \textbf{1987.5} &  \textbf{1987.5} &  \textbf{1987.5} &  \textbf{1987.5} &  \textbf{1987.5} &  \textbf{1987.5} &  \textbf{1987.5} &  \textbf{1987.5} &  \textbf{1987.5}\\
			\cline{2-12}
			&  \MD{} &  1938.2  &  \textbf{1987.5} &  \textbf{1987.5} &  \textbf{1987.5} &  \textbf{1987.5} &  \textbf{1987.5} &  \textbf{1987.5} &  \textbf{1987.5} &  \textbf{1987.5} &  \textbf{1987.5}\\
			\cline{2-12}
			&  \PB{} &  \textbf{1987.5} &  \textbf{1987.5} &  \textbf{1987.5} &  \textbf{1987.5} &  \textbf{1987.5} &  \textbf{1987.5} &  \textbf{1987.5} &  \textbf{1987.5} &  \textbf{1987.5} &  \textbf{1987.5}\\
			\cline{2-12}
			&  \MP{} &  \textbf{1987.5} &  \textbf{1987.5} &  \textbf{1987.5} &  \textbf{1987.5} &  \textbf{1987.5} &  \textbf{1987.5} &  \textbf{1987.5} &  \textbf{1987.5} &  \textbf{1987.5} &  \textbf{1987.5}\\
			
			\hline
			\multirow{5}{*}{power} &  \CB{}~\cite{2023RahulCent} &  90.5 &  158.8 &  223.4 &  285.9 &  350.0 &  406.4 &  470.9 &  523.4 &  569.5 &  612.5\\
			\cline{2-12}
			&  \LB{} &  84.3 &  155.8 &  209.1 &  260.6 &  304.7 &  353.5 &  413.7 &  462.2 &  505.7 &  552.8\\
			\cline{2-12}
			&  \MD{} &  96.5 &  160.0 &  229.3 &  291.4 &  342.9 &  395.5 &  451.2 &  505.2 &  538.2 &  606.5\\
			\cline{2-12}
			&  \PB{} &  106.3 &  182.0 &  247.1 &  325.9 &  387.4 &  449.4 &  503.6 &  557.3 &  608.8 &  663.4\\
			\cline{2-12}
			&  \MP{} &  \textbf{122.4} & \textbf{ 220.7} & \textbf{ 307.3} &  \textbf{384.9} &  \textbf{465.5} &  \textbf{542.1} &  \textbf{614.5} &  \textbf{687.9} &  \textbf{761.3} &  \textbf{825.6}\\
			\hline
		\end{tabular}
		\label{tab:threshold-without-related-interest}
	\end{table}
	\begin{figure}[!htbp]
		\centering
		\includegraphics[scale=0.30]{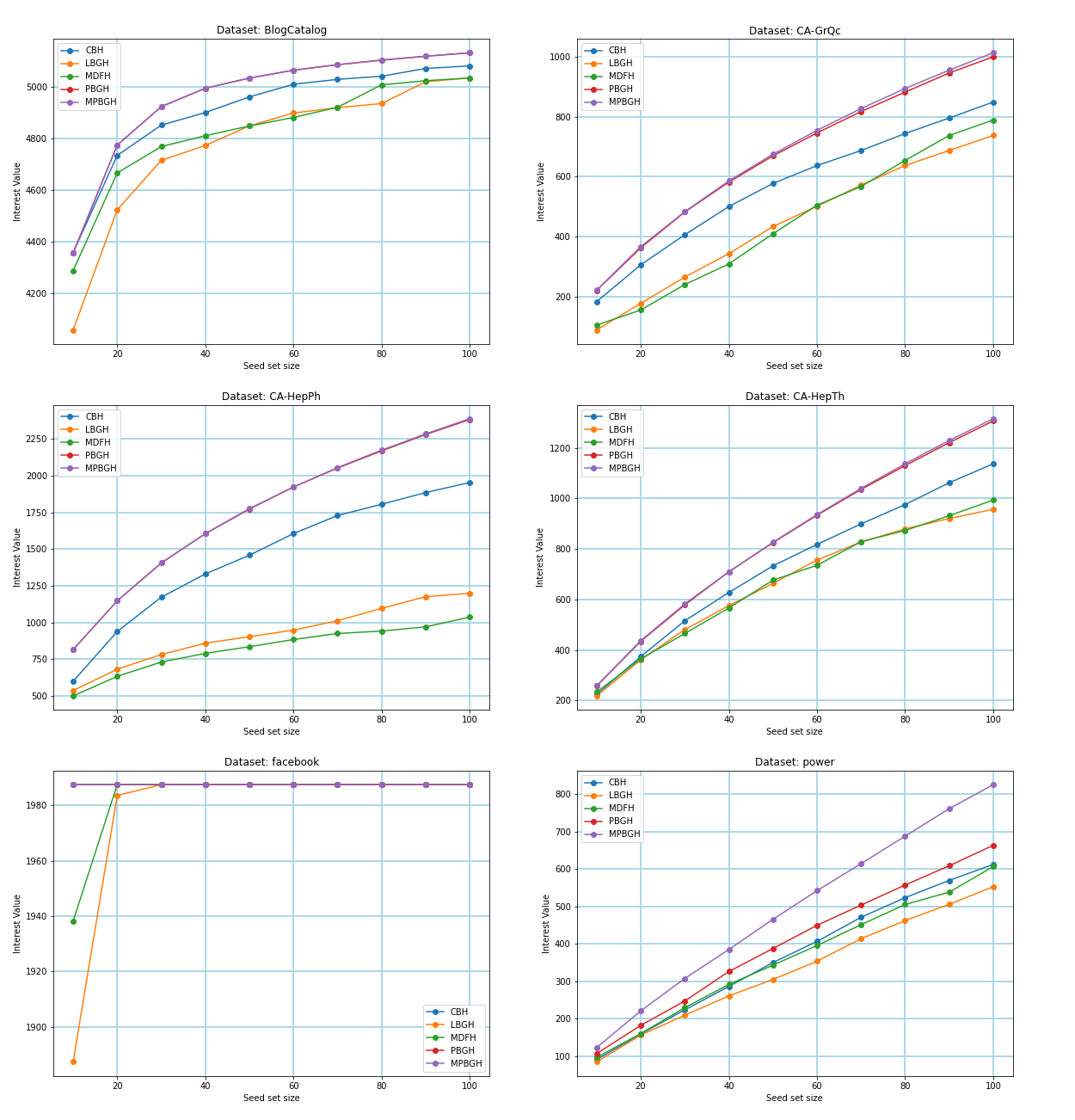}
		\caption{The sum of interest values associated with influenced vertices with mechanism $t_A(u)=\lceil deg(u)*0.5\rceil$ (setting (i)) under the LTM model. }
		\label{fig:threshold-without-related-interest}
	\end{figure}
	
	
	\begin{table}[!htbp]
		\caption{The sum of interest values associated with influenced vertices with mechanism $t_A(u)= \lceil deg(u)*(1-\eta(u)) \rceil$ (setting (ii)) under the LTM model.}
		\centering
		\scriptsize
		\begin{tabular}{  c c c c c c c c c c c c } 
			\hline
			\multirow{2}{*} {Dataset} & \multirow{2}{*}{Algo.} & \multicolumn{10}{c}{Seed Set Size $10-100$}\\
			\cline{3-12}
			&  & \textbf{10} & \textbf{20} & \textbf{30} & \textbf{40} & \textbf{50} & \textbf{60} & \textbf{70} & \textbf{80} & \textbf{90} & \textbf{100} \\
			\hline
			\multirow{5}{*}{\rot{BlogCatalog}}  &  \CB{}~\cite{2023RahulCent} &  5175.3 &  5189.8 &  \textbf{5193.7} &  \textbf{5193.7} &  \textbf{5193.7} &  \textbf{5193.7} &  \textbf{5193.7} &  \textbf{5193.7} &  \textbf{5193.7} &  \textbf{5193.7}\\
			\cline{2-12}
			&  \LB{} &  5175.8 &  5192.1 &  \textbf{5193.7} &  \textbf{5193.7} &  \textbf{5193.7} &  \textbf{5193.7} &  \textbf{5193.7} &  \textbf{5193.7} &  \textbf{5193.7} &  \textbf{5193.7}\\
			\cline{2-12}
			&  \MD{} &  5175.2 &  5190.4 &  \textbf{5193.7} &  \textbf{5193.7} &  \textbf{5193.7} &  \textbf{5193.7} &  \textbf{5193.7} &  \textbf{5193.7} &  \textbf{5193.7} &  \textbf{5193.7}\\
			\cline{2-12}
			&  \PB{} &  5184.0 &  5191.8 &  \textbf{5193.7} &  \textbf{5193.7} &  \textbf{5193.7} &  \textbf{5193.7} &  \textbf{5193.7} &  \textbf{5193.7} &  \textbf{5193.7} &  \textbf{5193.7}\\
			\cline{2-12}
			&  \MP{} &  \textbf{5184.3} &  \textbf{5192.9} &  \textbf{5193.7} &  \textbf{5193.7} &  \textbf{5193.7} &  \textbf{5193.7} &  \textbf{5193.7} &  \textbf{5193.7} &  \textbf{5193.7} &  \textbf{5193.7}\\
			\hline
			\multirow{5}{*}{\rot{CA-GrQc}} &  \CB{}~\cite{2023RahulCent} &  409.8 &  695.6 &  822.7 &  923.3 &  985.5 &  1033.2 &  1093.9 &  1165.4 &  1226.8 &  1274.0\\
			\cline{2-12}
			&  \LB{} &  353.7 &  629.4 &  770.0 &  860.3 &  929.7 &  992.2 &  1039.3 &  1101.7 &  1134.6 &  1183.1\\
			\cline{2-12}
			&  \MD{} &  289.2 &  563.1 &  770.2 &  936.8 &  1023.3 &  1091.1 &  1164.0 &  1232.2 &  1282.4 &  1323.9\\
			\cline{2-12}
			&  \PB{} &  479.4 &  806.7 &  922.7 &  1021.7 &  1119.0 &  1197.5 &  1260.4 &  1313.5 &  1372.6 &  1428.2\\
			\cline{2-12}
			&  \MP{} &  \textbf{641.8 }&  \textbf{863.0 }&  \textbf{1013.5} &  \textbf{1119.9} &  \textbf{1212.2} &  \textbf{1291.4} &  \textbf{1360.5} &  \textbf{1423.5} &  \textbf{1479.9} &  \textbf{1531.7}\\
			\hline
			\multirow{5}{*}{\rot{CA-HepPh}}  &  \CB{}~\cite{2023RahulCent} &  2361.5 &  2963.1 &  3281.7 &  3448.9 &  3697.0 &  3893.7 &  3960.5 &  4080.2 &  4158.2 &  4257.1\\
			\cline{2-12}
			&  \LB{} &  2753.6 &  3044.4 &  3245.9 &  3370.0 &  3491.2 &  3694.3 &  3838.4 &  3894.9 &  3949.9 &  4013.3\\
			\cline{2-12}
			&  \MD{} &  2692.1 &  2855.4 &  3229.2 &  3325.4 &  3438.7 &  3506.7 &  3661.1 &  3772.2 &  3922.8 &  4067.5\\
			\cline{2-12}
			&  \PB{} &  2547.1 &  2812.2 &  3325.6 &  3462.0 &  3624.0 &  3752.2 &  3860.2 &  3950.3 &  4072.9 &  4130.1\\
			\cline{2-12}
			&  \MP{} &  \textbf{3225.9} &  \textbf{3500.7} &  \textbf{3685.5} &  \textbf{3834.4} &  \textbf{3966.1} &  \textbf{4108.7} &  \textbf{4216.0} &  \textbf{4314.0} &  \textbf{4409.2} &  \textbf{4493.1}\\
			\hline
			\multirow{5}{*}{\rot{CA-HepTh}} &  \CB{}~\cite{2023RahulCent} &  723.1 &  1143.7 &  1351.7 &  1491.7 &  1601.6 &  1826.2 &  1944.1 &  2078.4 &  2158.9 &  2229.3\\
			\cline{2-12}
			&  \LB{} &  685.3 &  932.3 &  1214.0 &  1366.3 &  1635.3 &  1764.1 &  1854.6 &  1922.0 &  2000.1 &  2144.6\\
			\cline{2-12}
			&  \MD{} &  772.6 &  1086.3 &  1205.8 &  1510.9 &  1702.5 &  1819.4 &  1891.0 &  1953.9 &  2041.3 &  2136.8\\
			\cline{2-12}
			&  \PB{} &  776.7 &  1058.2 &  1323.7 &  1512.5 &  1714.8 &  1858.8 &  1981.3 &  2090.3 &  2184.7 &  2341.4\\
			\cline{2-12}
			&  \MP{} &  \textbf{943.0} &  \textbf{1349.8} &  \textbf{1638.3} &  \textbf{1843.2 }&  \textbf{2001.5} &  \textbf{2139.1} &  \textbf{2267.5} &  \textbf{2388.7} &  \textbf{2498.5} &  \textbf{2593.4}\\
			\hline
			\multirow{5}{*}{\rot{facebook}} &  \CB{}~\cite{2023RahulCent} &  \textbf{1987.5} &  \textbf{1987.5} &  \textbf{1987.5} &  \textbf{1987.5} &  \textbf{1987.5} &  \textbf{1987.5} &  \textbf{1987.5} &  \textbf{1987.5} &  \textbf{1987.5} &  \textbf{1987.5}\\
			\cline{2-12}
			&  \LB{} &  1938.8 &  1985.9 &  \textbf{1987.5} &  \textbf{1987.5} &  \textbf{1987.5} &  \textbf{1987.5} &  \textbf{1987.5} &  \textbf{1987.5} &  \textbf{1987.5} &\textbf{1987.5}\\
			\cline{2-12}
			&  \MD{} &  1939.3 &  \textbf{1987.5} &  \textbf{1987.5} &  \textbf{1987.5} &  \textbf{1987.5} &  \textbf{1987.5} &  \textbf{1987.5} &  \textbf{1987.5} &  \textbf{1987.5} &  \textbf{1987.5}\\
			\cline{2-12}
			&  \PB{} &  1987.5 &  \textbf{1987.5} &  \textbf{1987.5} &  \textbf{1987.5} &  \textbf{1987.5} &  \textbf{1987.5} &  \textbf{1987.5} &  \textbf{1987.5} &  \textbf{1987.5} &  \textbf{1987.5}\\
			\cline{2-12}
			&  \MP{} &  \textbf{1987.5} &  \textbf{1987.5} &  \textbf{1987.5} &  \textbf{1987.5} &  \textbf{1987.5} &  \textbf{1987.5} &  \textbf{1987.5} &  \textbf{1987.5} &  \textbf{1987.5} &  \textbf{1987.5}\\
			\hline
			\multirow{5}{*}{\rot{power}} &  \CB{}~\cite{2023RahulCent} &  97.0 &  169.1 &  250.4 &  320.1 &  393.3 &  463.4 &  537.0 &  604.3 &  656.0 &  697.1\\
			\cline{2-12}
			&  \LB{} &  102.5 &  187.6 &  244.0 &  315.3 &  372.1 &  423.3 &  500.6 &  571.6 &  623.6 &  663.2\\
			\cline{2-12}
			&  \MD{} &  119.5 &  199.3 &  292.9 &  359.7 &  407.6 &  453.9 &  521.4 &  583.0 &  648.9 &  712.5\\
			\cline{2-12}
			&  \PB{} &  126.3 &  246.1 &  342.1 &  443.5 &  532.7 &  612.2 &  693.1 &  759.3 &  830.7 &  896.4\\
			\cline{2-12}
			&  \MP{} &  \textbf{189.5} & \textbf{ 336.4} &  \textbf{459.5} &  \textbf{570.8} &  \textbf{671.4} &  \textbf{761.2} &  \textbf{847.3} &  \textbf{927.9} &  \textbf{1003.7} &  \textbf{1072.9}\\
			\hline 
		\end{tabular}
		\label{tab:threshold-related-interest}
	\end{table}

\begin{table}[!htbp]
		\caption{The sum of interest values associated with influenced vertices with mechanism $t_A(u)= \lceil deg(u)*(1-\eta(u)) \rceil$ (setting (ii)) under the LTM model. Note that the sum of interest values is in thousands.}
		\centering
		\scriptsize
		\begin{tabular}{  c c c c c c c c c c c c } 
			\hline
			\multirow{2}{*} {Dataset} & \multirow{2}{*}{Algo.} & \multicolumn{10}{c}{Seed Set Size $10-100$}\\
			\cline{3-12}
			&  & \textbf{10} & \textbf{20} & \textbf{30} & \textbf{40} & \textbf{50} & \textbf{60} & \textbf{70} & \textbf{80} & \textbf{90} & \textbf{100} \\
			\hline
			\multirow{2}{*}{{com-dblp.ungraph}}  &  \MD{} & 3.53 & 6.23 & 9.02 & 10.93 & 13.68 & 15.02 & 16.4 & 17.7 & 18.61& 20.89\\
			\cline{2-12}
			&  \PB{} & 3.72 & 6.01 & 9.27 & 10.96 & 12.14 & 14.52 & 15.47 & 17.11 & 18.07 & 19.06 \\
        \hline
   \multirow{2}{*}{{Email-EuAll}}  &  \MD{} & 22.36 & 33.37 & 39.91 & 45.83 & 51.59& 56.34 & 59.70 & 63.29 & 67.82 & 70.23\\
			\cline{2-12}
			&  \PB{} &  22.36 & 32.85 & 40.53 & 45.99 & 50.64& 55.07 & 59.27 & 63.21 & 66.55 & 69.69\\
			
			\hline 
		\end{tabular}
		\label{tab:threshold-related-interest_large}
	\end{table}

 \begin{table}[!htbp]
		\caption{The time taken by the heuristics are shown in the table with mechanism $t_A(u)= \lceil deg(u)*(1-\eta(u)) \rceil$ (setting (ii)) under the LTM model. Note that numbers are in seconds.}
		\centering
		\scriptsize
		\begin{tabular}{  c c c c c c c c c c c c } 
			\hline
			\multirow{2}{*} {Dataset} & \multirow{2}{*}{Algo.} & \multicolumn{10}{c}{Seed Set Size $10-100$}\\
			\cline{3-12}
			&  & \textbf{10} & \textbf{20} & \textbf{30} & \textbf{40} & \textbf{50} & \textbf{60} & \textbf{70} & \textbf{80} & \textbf{90} & \textbf{100} \\
			\hline
			\multirow{2}{*}{{com-dblp.ungraph}}  &  \MD{} & 4.48 & 9.77 & 15.76 & 22.73 & 31.14 & 39.61 & 49.18 & 59.38 & 71.81 & 88.07\\
			\cline{2-12}
			&  \PB{} & 18.25 & 52.47 & 100.24 & 161.75 & 237.76 & 329.2 & 435.31 & 560.04 & 700.2 & 899.97 \\
        \hline
   \multirow{2}{*}{{Email-EuAll}}  &  \MD{} & 3.41 & 7.3 & 11.88 & 17.14 & 25.72 & 34.48 & 46.2 & 57.24 & 67.32 & 81.78\\
			\cline{2-12}
			&  \PB{} & 10.28 & 30.86 & 59.42 & 89.78 & 138.87 & 201.87 & 260.99 & 346.89 & 437.59 & 528.47\\
			
			\hline 
		\end{tabular}
		\label{tab:threshold-related-interest_large_time}
	\end{table}

	\begin{figure}[!htbp]
		\centering
		\includegraphics[scale=0.30]{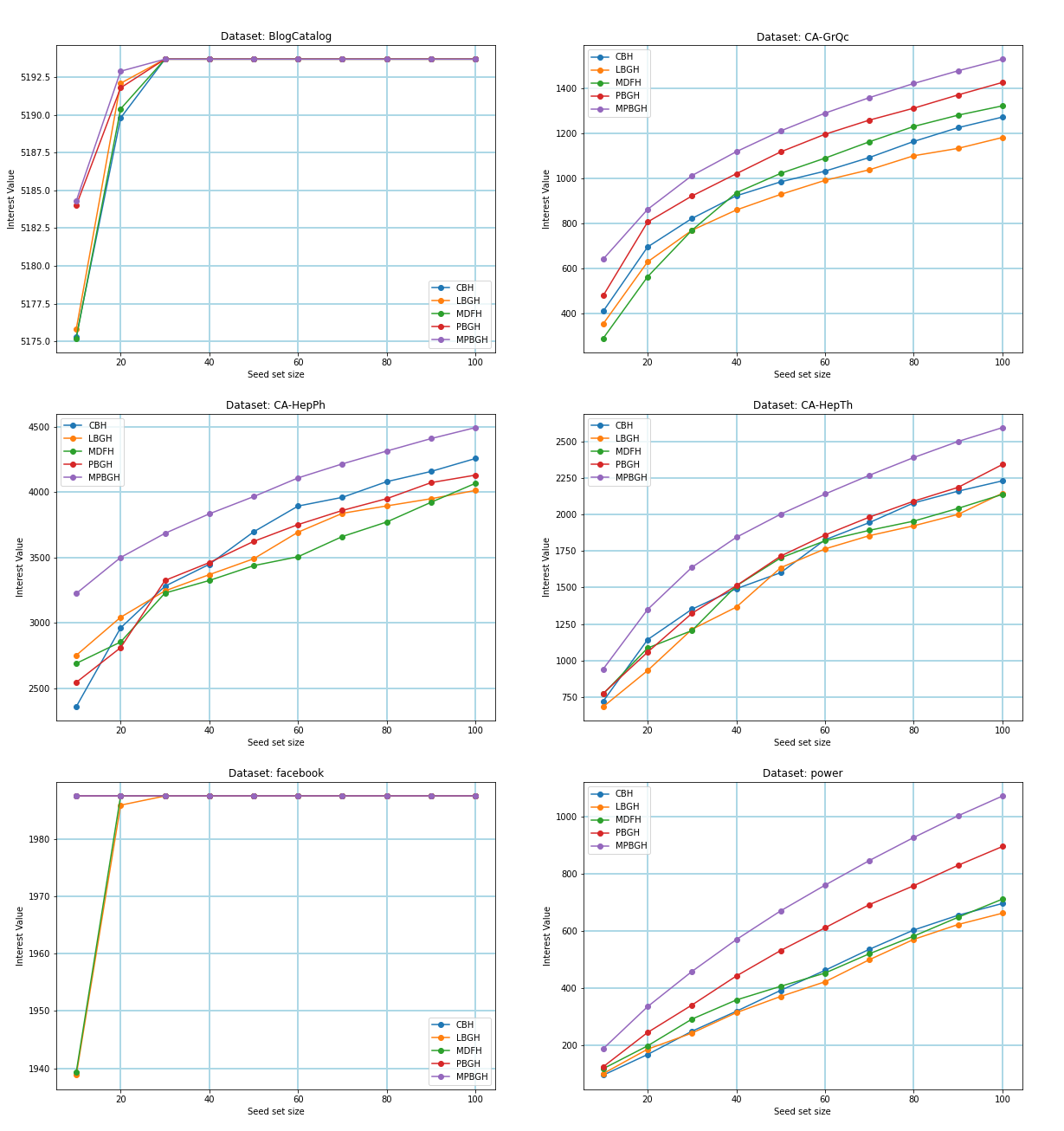}
		\caption{The sum of interest values associated with influenced vertices with mechanism $t_A(u)= \lceil deg(u)*(1-\eta(u)) \rceil$ (setting (ii)) under the LTM model. }
		\label{fig:threshold-related-interest}
	\end{figure}
	
	The \MPE{} (\MP{}) outperforms the remaining heuristics for all the real-world datasets. Note that \MP{} is computationally expensive. The other heuristic that competes well with \MP{} on all the data sets under both the threshold mechanisms is \PB{},  as can be seen in \figurename~\ref{fig:threshold-without-related-interest} and  \figurename~\ref{fig:threshold-related-interest}. Note that \PB{} is computationally much faster compared to \MP{}.
	For dense networks like BlogCatalog and Facebook, it can be seen that \PB{} comes much closer in performance to \MP{}. For the three datasets CA-HepPh, CA-GrQc, CA-HepTh, the \MP{} and  \PB{} are giving almost similar performance. The node $u$ selected by \MP{} has high diffusion strength, which may have a high-profit value. So  $u$ is also picked by \PB{}. As \PB{} is computationally much faster than \MP{}, this would turn out to be an advantage for \PB{}. But \MP{}'s performance is better than \PB{} for other datasets.
	
	For the threshold mechanism (ii), in the case of BlogCatalog and Facebook networks, the entire vertex set is influenced by $k > 20$ for all heuristics. Hence, they produce the same interest value as seen in \tablename~\ref{tab:threshold-related-interest}. Only in the case of Power data set, which has lower density and lower average degree, \MP{} outperforms all the other heuristics with a big gap in the interest value achieved as shown in  \tablename~\ref{tab:threshold-related-interest} and \PB{} does not compete well for datasets with the low average degree and density.

	
	\figurename~\ref{fig:threshold-related-interest} gives the results obtained for LTM under the threshold mechanism (ii), in which the threshold values are derived from the interest values of the vertices that influence or affect the propagation of the information in the networks. For all the networks,  \MP{} outperforms all the other heuristics. 
	\MP{} selects a seed vertex based on the diffusion strength of the vertex. Hence,  heuristic benefits from highly interested nodes located at more than one hop from the selected seed vertices.    All the heuristics \PB{}, \CB{}~\cite{2023RahulCent}, \MD{}, and \LB{} work well for the graphs with high average degree like BlogCatalog and facebook. 


\subsection{Interest Maximization under the Independent Cascade Model}
From an experimental point of view, we set the weights on the edge of the graph $G(V,E)$. The ICM is implemented with two different settings: $(i)$ $p(u,v)= 0.5$, and $(ii)$  $p(u,v)= 0.5*\eta(u)$.

As ICM is a probabilistic model, the heuristics (\LB{}, \MD{}, and \PB{}) are run $200$ times, and the average results are tabulated. As the time complexity of \MP{} is higher (~\ref{fig:time4}), so we have run this algorithm only $20$ times.

The results of the proposed algorithms under $(i)$ are shown in \tablename~\ref{tab:probability-independent-with-interest}, which are plotted in the \figurename~\ref{fig:probability-independent-with-interest}. Similarly, the results under setting $(ii)$ are depicted on \tablename~\ref{tab:probability-dependent-with-interest}, which are plotted in the \figurename~\ref{fig:probability-dependent-with-interest}.

For datasets like BlogCatalog and facebook,  which are dense and have higher average degrees, the \MP{} outperforms the other heuristics. Additionally, for sparse graphs like power, the gap between \MP{} and others increases with seed size until convergence. The \PB{} is in second place after \MP{}, but \PB{} is faster than \MP{} from the running time perspective.


Under mechanism (ii), the activation probability $p(u,v)=0.5*\eta(u)$ is in the range $(0,0.5]$ ; 
So, the information diffusion or the interest value achieved after diffusion is less than the first setting. In this case, also, \MP{} outperforms all the other heuristics. 

As shown in \figurename~\ref{fig:probability-independent-with-interest} and \figurename~\ref{fig:probability-dependent-with-interest}, for the sparse graph, \LB{} performs well compared to \MD{} and \PB{}. Overall, the \MP{} outperforms the other heuristics, but time complexity is higher than other heuristics because the \MP{} tests the strength of diffusion capability of each vertex by running the diffusion model and then selecting the seed vertex.

If we see the figures \figurename~\ref{fig:probability-independent-with-interest} and \figurename~\ref{fig:probability-dependent-with-interest} ( corresponding tables \tablename~\ref{tab:probability-independent-with-interest} and \tablename~\ref{tab:probability-dependent-with-interest}), the sum of interest value is less for the case $(ii)$($p(u,v)= 0.5*\eta(u)$) as compared to case $(i)$ ($0.5*\eta(u)$) for same seed size and dataset. The reason is value $0.5$ more than or equal to $0.5*eta(u)$ (where $\eta(u)\in[0,1]$). To activate a vertex from $u$, the activation probability should be less than or equal to $0.5$ or $0.5*\eta(u)$. It is very obvious that the first case $(i)$ ($0.5$) can activate more vertices that lead to a higher sum of interest value. Therefore, the sum of interest is higher in case of weight on edges is $0.5$.


\begin{table}[!htbp]
	\caption{ The sum of interest value associated with influenced vertices under the ICM where the weight on each edge is considered $0.5$.}
	\centering
	\scriptsize
	\begin{tabular}{ c c c c c c c c c c c c } 
		\hline
		\multirow{1}{*} {Dataset} & \multirow{1}{*}{Algo.} & \multicolumn{10}{c}{Seed Set Size $10-100$}\\
		\cmidrule{3-12}
		&  & \textbf{10} & \textbf{20} & \textbf{30} & \textbf{40} & \textbf{50} & \textbf{60} & \textbf{70} & \textbf{80} & \textbf{90} & \textbf{100} \\
		\hline
		\multirow{4}{*}{\rot{BlogCatalog}}  &  \LB{} &  5030.0 &  5032.6 &  5035.2 &  5037.8 &  5040.7 &  5043.3 &  5046.1 &  5049.1 &  5051.4 &  5054.0\\
        \cline{2-12}
         &  \MD{} &  5037.1 &  5041.9 &  5047.1 &  5052.1 &  5057.3 &  5062.6 &  5067.7 &  5073.0 &  5078.4 &  5083.7\\
        \cline{2-12}
         &  \PB{} &  5032.3 &  5041.8 &  5051.0 &  5060.0 &  5068.8 &  5077.2 &  5085.3 &  5093.0 &  5100.6 &  5107.8\\
        \cline{2-12}
		&  \MP{} & \textbf{ 5068.7} & \textbf{ 5077.9} & \textbf{ 5086.7} & \textbf{ 5095.2} & \textbf{ 5103.3} & \textbf{ 5111.1} & \textbf{ 5118.6} & \textbf{ 5125.8} & \textbf{ 5132.8} & \textbf{ 5139.3}\\
		\hline
		\multirow{5}{*}{\rot{CA-GrQc}}&  \LB{} &  1570.4 &  1595.9 &  1619.2 &  1640.5 &  1658.6 &  1676.8 &  1693.3 &  1712.3 &  1729.8 &  1745.8\\
\cline{2-12}
 &  \MD{} &  1555.0 &  1581.3 &  1603.6 &  1624.9 &  1642.7 &  1660.8 &  1678.4 &  1692.4 &  1703.9 &  1717.5\\
\cline{2-12}
 &  \PB{} &  1574.2 &  1606.2 &  1632.2 &  1656.2 &  1680.4 &  1700.5 &  1719.4 &  1737.7 &  1754.0 &  1769.4\\
\cline{2-12}
		&  \MP{} & \textbf{ 1647.0} & \textbf{ 1687.3} & \textbf{ 1723.0} & \textbf{ 1755.0} & \textbf{ 1784.5} & \textbf{ 1811.9} & \textbf{ 1837.7} & \textbf{ 1861.9} & \textbf{ 1884.6} & \textbf{ 1906.0}\\
		\hline
		\multirow{4}{*}{\rot{CA-HepPh}}  &  \LB{} &  4817.8 &  4824.7 &  4830.6 &  4837.1 &  4844.8 &  4851.7 &  4857.3 &  4861.6 &  4868.8 &  4873.0\\
\cline{2-12}
 &  \MD{} &  4825.6 &  4853.6 &  4874.4 &  4888.5 &  4905.2 &  4921.6 &  4936.4 &  4947.8 &  4958.0 &  4973.0\\
\cline{2-12}
 &  \PB{} &  4839.9 &  4872.3 &  4899.2 &  4922.2 &  4942.6 &  4960.7 &  4976.2 &  4991.6 &  5007.2 &  5021.7\\
\cline{2-12}
		&  \MP{} & \textbf{ 4917.5} & \textbf{ 4957.4} & \textbf{ 4991.7} & \textbf{ 5022.2} & \textbf{ 5050.6} & \textbf{ 5076.6} & \textbf{ 5100.0} & \textbf{ 5121.7} & \textbf{ 5141.9} & \textbf{ 5160.9}\\
		\hline
		\multirow{4}{*}{\rot{CA-HepTh}}  &  \LB{} &  3285.5 &  3297.7 &  3311.9 &  3326.3 &  3335.8 &  3343.8 &  3349.7 &  3357.3 &  3364.0 &  3372.7\\
\cline{2-12}
 &  \MD{} &  3317.2 &  3345.2 &  3366.6 &  3388.9 &  3412.3 &  3429.8 &  3443.8 &  3462.2 &  3478.9 &  3491.5\\
\cline{2-12}
 &  \PB{} &  3305.7 &  3341.3 &  3372.2 &  3399.1 &  3424.3 &  3446.3 &  3466.8 &  3485.8 &  3505.9 &  3523.1\\
\cline{2-12}
		&  \MP{} & \textbf{ 3412.8} & \textbf{ 3459.4} & \textbf{ 3500.5} & \textbf{ 3537.3} & \textbf{ 3570.5} & \textbf{ 3600.4} & \textbf{ 3628.7} & \textbf{ 3655.2} & \textbf{ 3680.0} & \textbf{ 3703.4}\\
		\hline
		\multirow{4}{*}{\rot{facebook} } &  \LB{} &  1947.1 &  1949.9 &  1952.5 &  1955.1 &  1958.9 &  1965.6 &  1973.0 &  1979.4 &  1983.3 &  1984.1\\
\cline{2-12}
 &  \MD{} &  1948.5 &  1954.0 &  1959.2 &  1964.8 &  1969.5 &  1975.2 &  1979.9 &  1983.9 &  1985.2 &  1985.5\\
\cline{2-12}
 &  \PB{} &  1953.0 &  1961.5 &  1968.8 &  1974.8 &  1979.7 &  1983.1 &  1985.5 &  1986.9 &  1987.4 &  1987.5\\
\cline{2-12}
		&  \MP{} & \textbf{ 1967.6} & \textbf{ 1974.8} & \textbf{ 1980.4} & \textbf{ 1984.3} & \textbf{ 1986.6} & \textbf{ 1987.5} & \textbf{ 1987.5} & \textbf{ 1987.5} & \textbf{ 1987.5} & \textbf{ 1987.5}\\
		\hline
		\multirow{4}{*}{\rot{power}} &  \LB{} &  419.6 &  554.6 &  680.1 &  736.9 &  780.0 &  812.5 &  859.2 &  888.1 &  912.4 &  939.6\\
\cline{2-12}
 &  \MD{} &  441.2 &  627.0 &  725.2 &  811.1 &  877.1 &  924.5 &  972.5 &  1008.8 &  1048.9 &  1083.8\\
\cline{2-12}
 &  \PB{} &  406.8 &  575.2 &  707.4 &  798.2 &  856.4 &  916.9 &  963.9 &  1004.9 &  1049.9 &  1090.0\\
\cline{2-12}
		&  \MP{} & \textbf{ 955.9} & \textbf{ 1176.3} & \textbf{ 1308.0} & \textbf{ 1406.9} & \textbf{ 1484.6} & \textbf{ 1547.4} & \textbf{ 1606.2} & \textbf{ 1651.7} & \textbf{ 1695.7} & \textbf{ 1733.7}\\
		\hline
		
	\end{tabular}
	\label{tab:probability-independent-with-interest}
\end{table}

\begin{figure}[!htbp]
	\centering
	\includegraphics[scale=0.30]{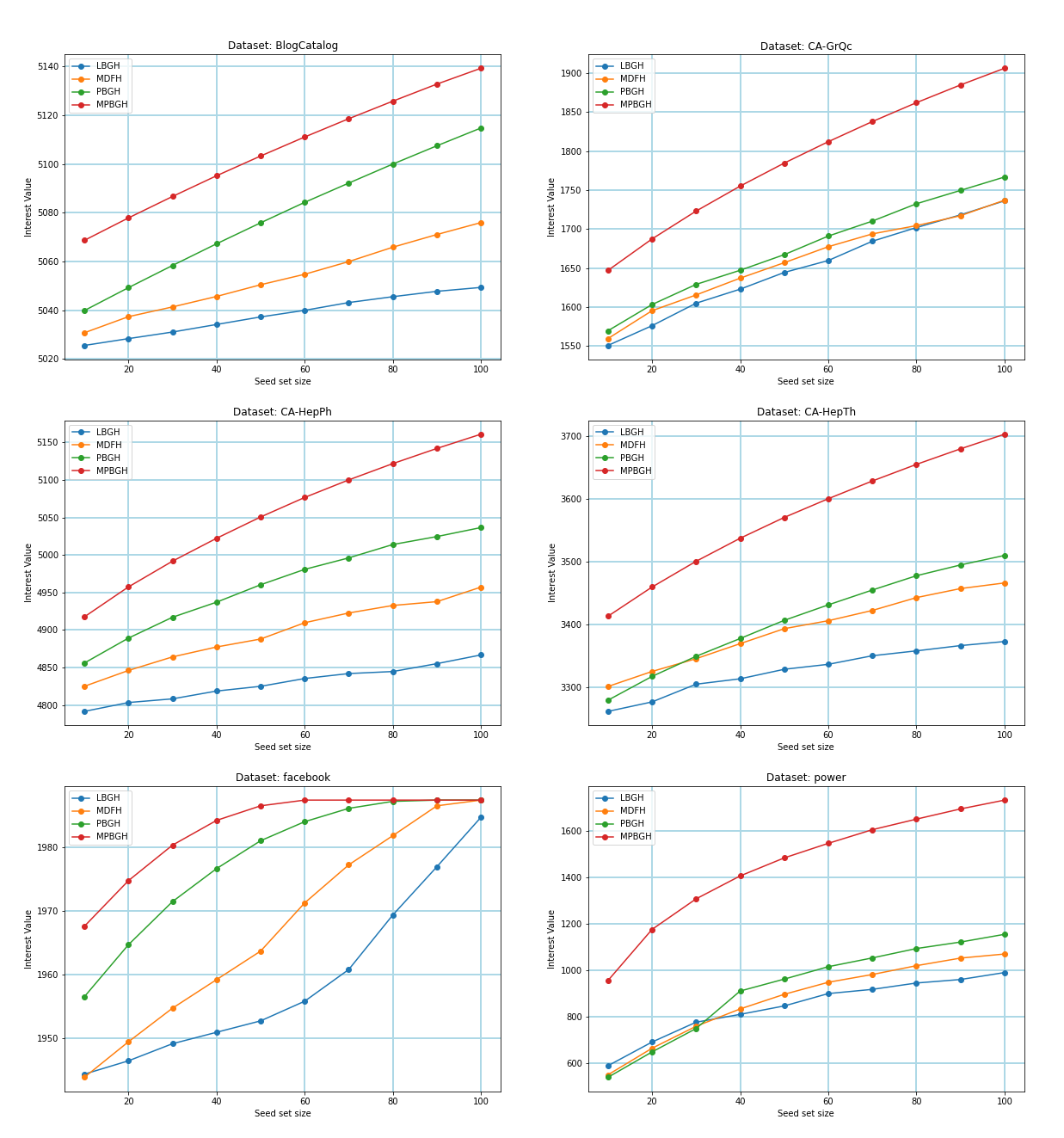}
	\caption{The sum of interest value associated with influenced vertices under the ICM setting (i) }
	\label{fig:probability-independent-with-interest}
\end{figure}


\begin{table}[!htbp]
	\caption{The sum of interest value associated with influenced vertices under the ICM where the weight on each edge is considered $p(u,v)= 0.5*\eta(u)$ ( setting (ii) ).}
	\centering
	\scriptsize
	\begin{tabular}{  c c c c c c c c c c c c } 
		\hline
		\multirow{2}{*} {Dataset} & \multirow{2}{*}{Algo.} & \multicolumn{10}{c}{Seed Set Size $10-100$}\\
		\cline{3-12}
		&  & \textbf{10} & \textbf{20} & \textbf{30} & \textbf{40} & \textbf{50} & \textbf{60} & \textbf{70} & \textbf{80} & \textbf{90} & \textbf{100} \\
		\hline
		\multirow{5}{*}{\rot{BlogCatalog}} &  \LB{} &  4727.8 &  4730.6 &  4733.0 &  4735.8 &  4738.5 &  4741.0 &  4743.6 &  4746.0 &  4748.3 &  4750.7\\
\cline{2-12}
 &  \MD{} &  4732.0 &  4736.6 &  4741.6 &  4746.9 &  4751.9 &  4757.4 &  4762.2 &  4766.8 &  4771.7 &  4777.0\\
\cline{2-12}
 &  \PB{} &  4740.1 &  4749.8 &  4759.5 &  4769.2 &  4778.8 &  4788.2 &  4797.6 &  4806.9 &  4816.1 &  4825.2\\
\cline{2-12}
		&  \MP{} & \textbf{ 4783.7} & \textbf{ 4793.7} & \textbf{ 4803.4} & \textbf{ 4813.1} & \textbf{ 4822.6} & \textbf{ 4831.9} & \textbf{ 4841.2} & \textbf{ 4850.5} & \textbf{ 4859.6} & \textbf{ 4868.6}\\
		\hline
		\multirow{4}{*}{\rot{CA-GrQc}} &  \LB{} &  868.1 &  904.0 &  930.4 &  960.4 &  986.9 &  1008.4 &  1027.3 &  1045.1 &  1064.4 &  1080.3\\
\cline{2-12}
 &  \MD{} &  836.3 &  869.0 &  891.1 &  909.3 &  925.2 &  944.3 &  960.0 &  976.7 &  994.1 &  1011.1\\
\cline{2-12}
 &  \PB{} &  885.8 &  925.2 &  951.3 &  975.0 &  993.3 &  1016.0 &  1033.3 &  1048.1 &  1063.8 &  1078.2\\
\cline{2-12}
		&  \MP{} & \textbf{ 1020.5} & \textbf{ 1085.2} & \textbf{ 1135.8} & \textbf{ 1182.2} & \textbf{ 1224.5} & \textbf{ 1262.6} & \textbf{ 1298.3} & \textbf{ 1331.4} & \textbf{ 1363.4} & \textbf{ 1392.2}\\
		\hline
		\multirow{4}{*}{\rot{CA-HepPh}}  &  \LB{} &  3552.4 &  3561.4 &  3571.1 &  3577.2 &  3580.7 &  3585.8 &  3594.9 &  3601.3 &  3605.3 &  3611.0\\
\cline{2-12}
 &  \MD{} &  3575.5 &  3597.3 &  3620.8 &  3635.8 &  3655.8 &  3666.6 &  3676.4 &  3688.8 &  3703.6 &  3715.7\\
\cline{2-12}
 &  \PB{} &  3605.1 &  3634.3 &  3662.0 &  3686.6 &  3710.1 &  3731.6 &  3748.0 &  3766.5 &  3783.9 &  3802.7\\
\cline{2-12}
		&  \MP{} & \textbf{ 3725.0} & \textbf{ 3785.2} & \textbf{ 3838.9} & \textbf{ 3887.8} & \textbf{ 3931.8} & \textbf{ 3970.9} & \textbf{ 4008.2} & \textbf{ 4044.0} & \textbf{ 4077.4} & \textbf{ 4110.0}\\
		\hline
		\multirow{4}{*}{\rot{CA-HepTh}}  &  \LB{} &  1883.2 &  1891.5 &  1901.7 &  1911.9 &  1920.2 &  1929.7 &  1938.9 &  1946.5 &  1952.8 &  1956.6\\
\cline{2-12}
 &  \MD{} &  1875.3 &  1916.3 &  1940.0 &  1962.2 &  1988.7 &  2013.7 &  2034.9 &  2049.9 &  2067.5 &  2086.2\\
\cline{2-12}
 &  \PB{} &  1895.6 &  1939.1 &  1975.2 &  2004.5 &  2029.7 &  2054.0 &  2076.7 &  2096.8 &  2118.5 &  2142.7\\
\cline{2-12}
		&  \MP{} & \textbf{ 2101.2} & \textbf{ 2204.7} & \textbf{ 2268.9} & \textbf{ 2334.2} & \textbf{ 2388.9} & \textbf{ 2438.4} & \textbf{ 2485.4} & \textbf{ 2530.2} & \textbf{ 2570.1} & \textbf{ 2609.0}\\
		\hline
		\multirow{4}{*}{\rot{facebook}} &  \LB{} &  1829.9 &  1837.3 &  1840.1 &  1842.5 &  1845.1 &  1847.9 &  1851.0 &  1853.9 &  1856.5 &  1859.8\\
\cline{2-12}
 &  \MD{} &  1823.1 &  1830.8 &  1838.2 &  1846.3 &  1854.7 &  1861.2 &  1868.1 &  1874.9 &  1882.3 &  1888.0\\
\cline{2-12}
 &  \PB{} &  1841.8 &  1852.3 &  1861.5 &  1868.5 &  1875.8 &  1884.3 &  1892.6 &  1900.9 &  1909.1 &  1916.6\\
\cline{2-12}
		&  \MP{} & \textbf{ 1874.6} & \textbf{ 1886.2} & \textbf{ 1895.7} & \textbf{ 1904.7} & \textbf{ 1913.2} & \textbf{ 1921.4} & \textbf{ 1928.9} & \textbf{ 1936.0} & \textbf{ 1942.7} & \textbf{ 1948.9}\\
		\hline
		\multirow{4}{*}{\rot{power}}&  \LB{} &  29.7 &  44.1 &  58.3 &  69.7 &  79.6 &  87.5 &  100.5 &  109.5 &  120.0 &  133.5\\
\cline{2-12}
 &  \MD{} &  42.2 &  85.0 &  110.7 &  131.4 &  153.3 &  168.9 &  181.8 &  207.0 &  233.6 &  250.4\\
\cline{2-12}
 &  \PB{} &  49.0 &  91.7 &  138.3 &  167.9 &  190.9 &  218.1 &  242.9 &  268.5 &  288.5 &  302.7\\
\cline{2-12}
		&  \MP{} & \textbf{ 201.4} & \textbf{ 354.7} & \textbf{ 469.9} & \textbf{ 579.8} & \textbf{ 667.9} & \textbf{ 743.0} & \textbf{ 815.9} & \textbf{ 879.9} & \textbf{ 935.2} & \textbf{ 988.9}\\
		\hline
	\end{tabular}
	\label{tab:probability-dependent-with-interest}
\end{table}

\begin{figure}[!htbp]
	\centering
	\includegraphics[scale=0.30]{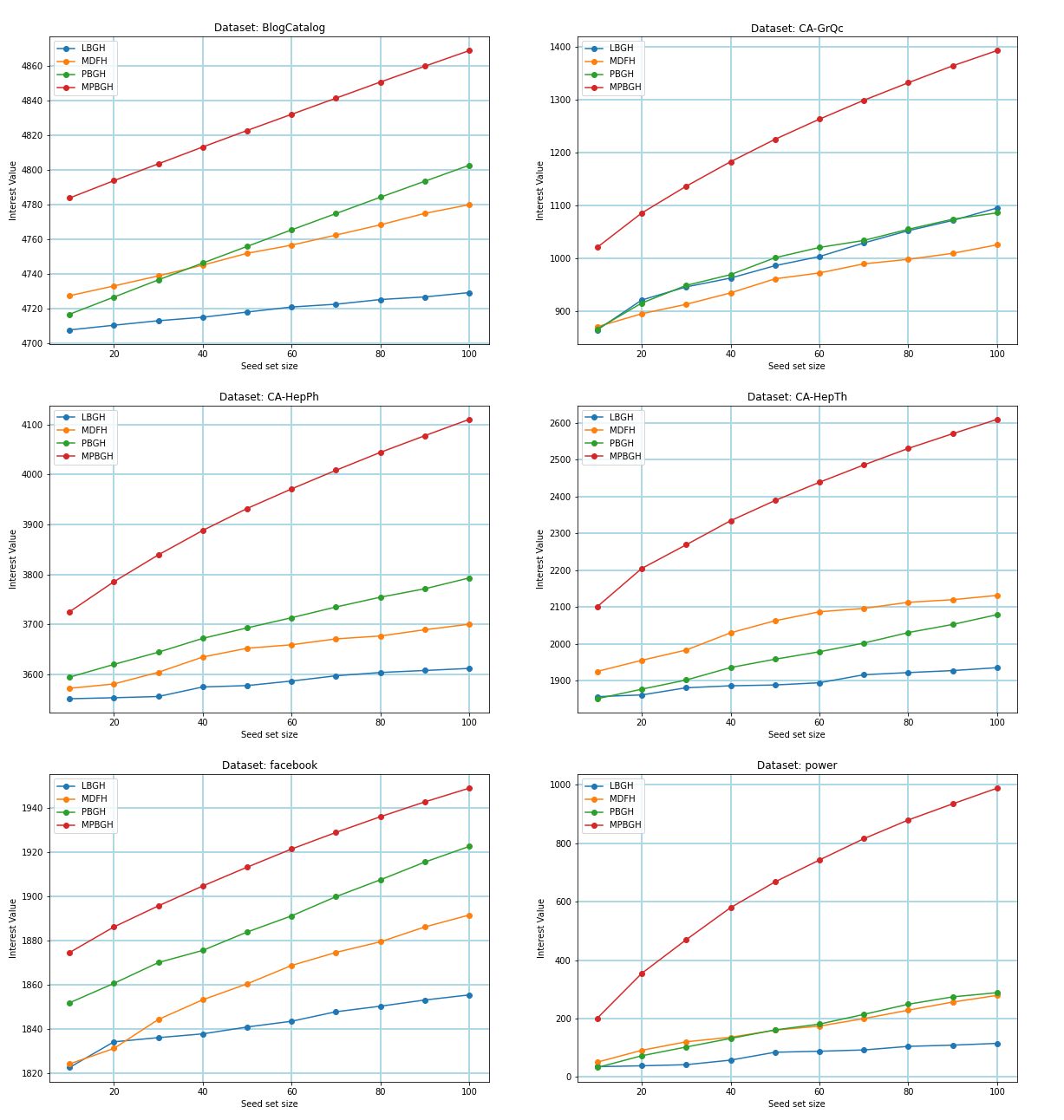}
	\caption{The sum of interest value associated with influenced vertices under the ICM, setting (ii).}
	\label{fig:probability-dependent-with-interest}
\end{figure}




\section{Performance Analysis}
   \figurename~\ref{fig:time2} and \figurename~\ref{fig:time4} are the graphs between computation time and seed set size. The heuristics are run multiple times and taken its average running time in \figurename~\ref{fig:time4}. The time taken by all heuristics for the ICM model is more than the LTM model. Because the chances of getting less than $0.5*\eta(u)$ probability is low.  Few vertices get activated, and non-activated vertices are checked again in the next iteration. On the other hand, the Algorithm activates more vertices in LTM, which leads to a few checks for the next seed vertex among non-activated vertices. So, the same algorithm run for LTM takes less time than for the ICM model. We have tested two heuristics (\MD{} and \PB{}) on large datasets which are tabulated in \tablename~\ref{tab:threshold-related-interest_large} and time taken by algorithms in \tablename~\ref{tab:threshold-related-interest_large_time}. As other algorithms took more than $20$ minutes, we did not put their results.  

   We compare the performance of \MPBF{} with the Gurbi`\cite{gurobi} of Linear Integer programming in \tablename~\ref{tab:perfromace-gurbi}. We ran the Gurbi~\cite{gurobi} for at most $6$ minutes and could not find the optimal solution for seed sets $5,10$ and$ 12$ on Jazz~\ref{tab:graphs} dataset. The \MP{} gives better than the Gurbi~\cite{gurobi} Solver's result within a second; similarly, \PB{} also outperforms in time perspective because the Gurbi~\cite{gurobi} solver stops with finding the optimal solution of linear equations. Overall, the \PB{} is highly scalable for large datasets with good results. But the quality of result perspective, \MP{} outperforms all other heuristic algorithms. For future improvement of results, the datasets and code are available on project~\citeauthor{intmax}~\cite{intmax}.

\begin{table}
    \centering
    \caption{For Jazz Dataset, Sum of interest of value associated with influenced vertices under linear threshold model with setting $t(u)=\lceil 0.5*degree(u) \rceil$.}
    \begin{tabular}{cccccc}
        Seed-Size & Gurbi~\cite{gurobi} & \LB{}  & \MD{}  & \PB{}  & \MD{} 
        \\ \hline
         5 & 86.64 & 79.7 & 75.8 & 89.1  &  89.1\\
         10 & 96.64 & 86.01 & 85.37 & 96.9 & 97 \\ 
         \hline
    \end{tabular}
    \label{tab:perfromace-gurbi}
\end{table}

\begin{figure}
    \centering
    \includegraphics[width=\linewidth]{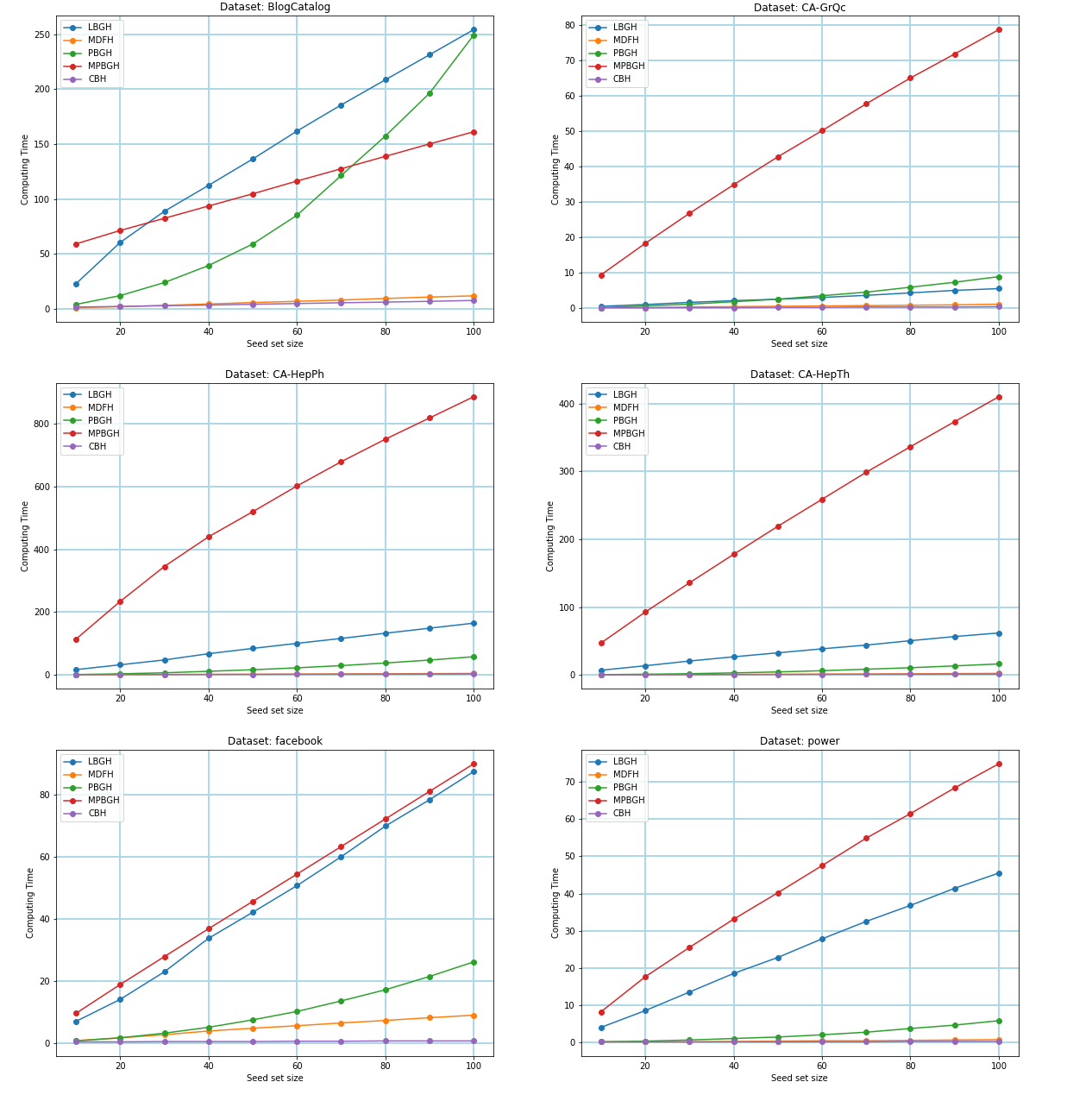}
    \caption{Computation time under LTM with setting $t_A(u)= \lceil deg(u)*(1-\eta(u)) \rceil$. }
    \label{fig:time2}
\end{figure}

\begin{figure}
    \centering
    \includegraphics[width=\linewidth]{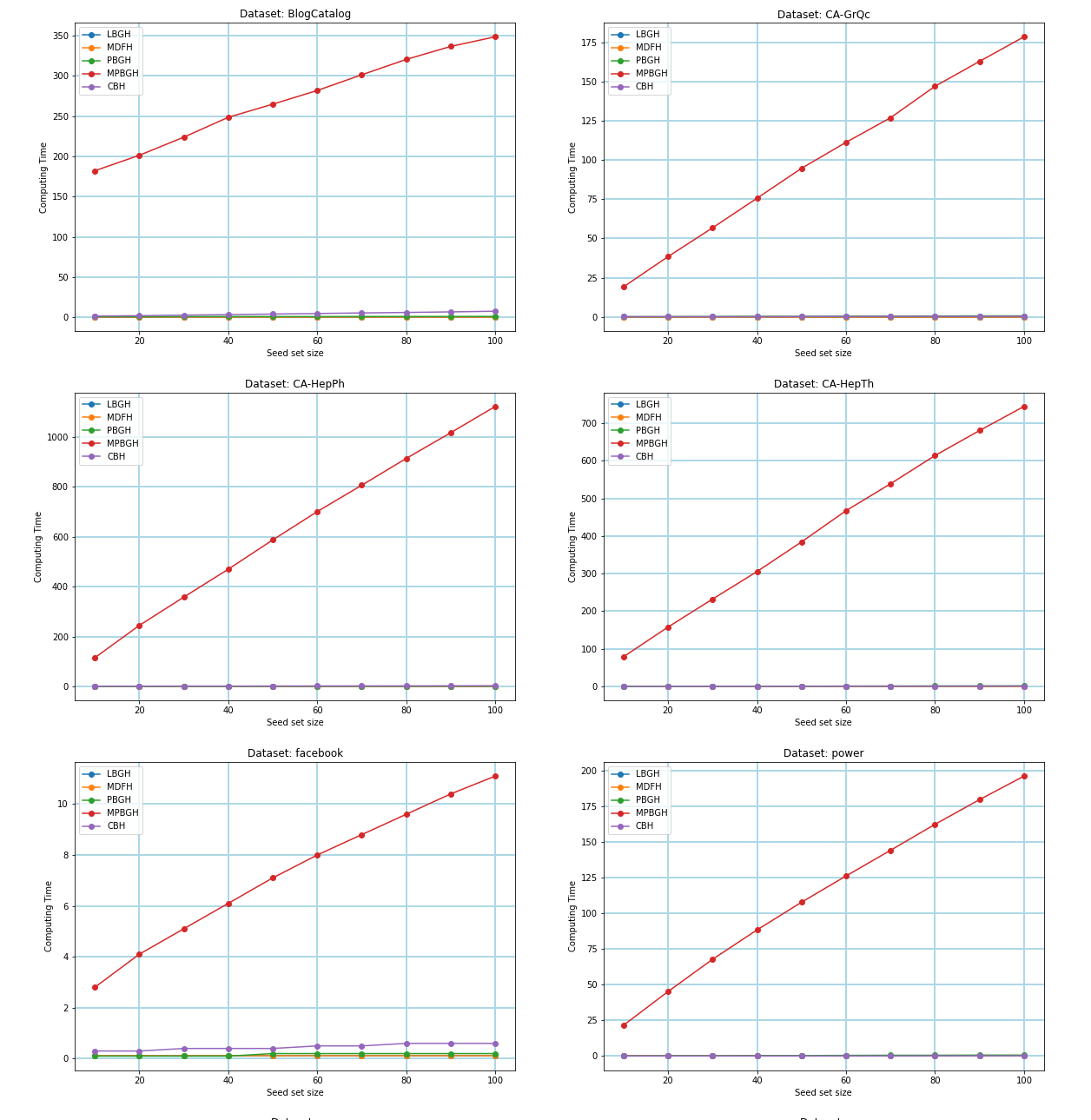}
    \caption{Computation time under ICM with setting $p(u,v)=0.5*\eta(u)$.}
    \label{fig:time4}
\end{figure}

\clearpage
\newpage
\section{Conclusion}
\label{sec:conclusion}
In this paper, we propose the problem of \IM{} on social networks.  The impact of people's interest in information propagation is studied. Small firms or companies can reach their highly interested customers through social media. The chances of selling products of the company get higher than randomly introducing products to all. We know a small company can not influence all people on social networks or does not need to influence all due to a limited product supply. Therefore, in this problem, for the given seed set size, the target is to maximize the sum of interest of influenced people. NP-Hardness and LP-formulation are proposed for \IM{}. From the experimental point of view, four heuristics (\LB{}, \MD{}, \PB{}, \MP{}) are presented in the paper and tested on $6$ data sets. We compare our heuristics to recent work ~\cite{2023RahulCent} under LTM and ICM. The \MPE{} outperforms all datasets. For dense graphs, the performance of \PB{} under LTM, which is computationally faster, is close to \MP{}. In future, one may look at problem is solvable for tree or not in polynomial time.   

\bibliography{sn-bibliography}

\end{document}